\spnewtheorem*{problem1}{Problem}{\bfseries}{\rmfamily}
\newcommand{\lca}{\ensuremath{\operatorname{lca}}}
\newcommand{\Mo}{\ensuremath{M^\odot}}
\newcommand{\Mop}{\ensuremath{\widetilde{M}^\odot}}
\renewcommand{\varnothing}{\odot}
\newcommand{\de}{\ensuremath{\delta}}
\renewcommand{\hat}{\widehat}
 \journalname{}
\begin{document}

\title{On Tree Representations of Relations and Graphs
		\thanks{Parts of this paper were presented at the 
					21st Annual International Computing and Combinatorics Conference (COCOON 2015), 
  					August 4-6, 2015, Beijing, China \cite{HW:15}. 
					This work was funded by the German Research Foundation
					(DFG) (Proj. No. MI439/14-1).}
}
\subtitle{Symbolic Ultrametrics and Cograph Edge Decompositions}

\titlerunning{Tree Representations of Graphs}        

\author{Marc Hellmuth         \and
        Nicolas Wieseke 
}


\institute{M. Hellmuth \at
			 \emph{Department of Mathematics and Computer Science, University of Greifswald, 
			 Walther-Rathenau-Strasse 47,
			 D-17487 Greifswald, 
			 Germany}
			\\
			Center for Bioinformatics, Saarland University, Building E 2.1, D-66041
			Saarbr{\"u}cken, Germany \\
            \email{mhellmuth@mailbox.org}             \\
           \and
           N. Wieseke \at
           Parallel Computing and Complex Systems Group, Department of
			Computer Science, Leipzig University, Augustusplatz 10, D-04109
			Leipzig, Germany\\
			\email{wieseke@informatik.uni-leipzig.de}
}

\date{Received: date / Accepted: date}

\maketitle

\begin{abstract}
Tree representations of (sets of) symmetric binary relations, or equivalently
edge-colored undirected graphs, are of central interest, e.g.\ in phylogenomics.
In this context symbolic ultrametrics play a crucial role. Symbolic ultrametrics
define an edge-colored complete graph that allows to represent the topology of
this graph as a vertex-colored tree. Here, we are interested in the structure
and the complexity of certain combinatorial problems resulting from
considerations based on symbolic ultrametrics, and on algorithms to solve them.

This includes, the characterization of symbolic ultrametrics that additionally 
distinguishes between edges and non-edges of \emph{arbitrary} edge-colored 
graphs $G$ and thus, yielding a tree representation of $G$, by means of so-called cographs. 
Moreover, we address the problem of finding ``closest'' symbolic ultrametrics and
show the NP-completeness of the three problems: symbolic ultrametric editing, completion and deletion.
Finally, as not all graphs are cographs, and hence, don't have a tree representation, 
we ask, furthermore, what is the minimum number of cotrees needed to represent the topology of 
an arbitrary non-cograph $G$. This is equivalent to find
an optimal cograph edge $k$-decomposition $\{E_1,\dots,E_k\}$ of $E$ so that each
subgraph $(V,E_i)$ of $G$ is a cograph. We investigate this problem in full detail,
resulting in several new open problems, and NP-hardness results. 

For all optimization problems proven to be NP-hard we will provide integer
linear program (ILP) formulations to efficiently solve them.
%
\keywords{Symbolic Ultrametric \and Cograph \and Edge Partition \and Editing \and Integer Linear Program (ILP) \and NP-complete}
\end{abstract}

\sloppy
\section{Introduction}

Tree representations of relations between certain objects lie at the heart
of many problems, in particular, in phylogenomic studies
\cite{HHH+13,HLS+14,LE:15}. 
Phylogenetic Reconstructions are concerned with the study of the evolutionary
history of groups of systematic biological units, e.g. genes or species. The
objective is the assembling of so-called phylogenetic trees or networks that
represent a hypothesis about the evolutionary ancestry of a set of genes,
species or other taxa.

Consider a symmetric map $\de\colon V\times V\to \Mo$
that assigns to each pair $(x,y)$ a symbol or color $m\in \Mo$. The question
then arises whether it is possible to determine a rooted tree $T$ with a
vertex-labeling $t$ so that the lowest common ancestor $\lca(x,y)$ of distinct
leaves $x$ and $y$ in $T$ is labeled with $m\in \Mo$ if and only if $\de(x,y)=m$. Such
a tree is then called \emph{symbolic representation} of $\de$. In phylogenomics,
such maps $\de$ can be interpreted as an assignment of evolutionary
relationships between two genes and the symbolic representation of
such relations then reflect the evolutionary history together with the
respective events that happened when two genes diverged. 
It has recently be shown, that in theory \cite{HHH+12} and in
practice \cite{HLS+14} it is even possible 
to reconstruct the evolutionary history of species, 
where the genes have been taken from, whenever
the  symbolic representation is known.

The problem of finding such symbolic representations $(T,t)$ has been first
addressed by B{\"o}cker and Dress \cite{BD98} in a mathematical context. 
The authors showed, that there is
a symbolic representation $(T,t)$ of $\de$ if and only if the map $\delta$
fulfills the properties of a so-called symbolic ultrametric \cite{BD98}. 
Clearly, any such map $\de\colon V\times V\to \Mo$ 
is equivalent to a set of disjoint symmetric binary relations $\{R_m\mid m\in \Mo\}$
with $(x,y)\in R_{\de(x,y)}$ or 
an edge-colored complete graph $K_{|V|}=(V, {{V}\choose{2}})$
so that each edge $[x,y]$ obtains the color $\de(x,y)\in \Mo$.
In \cite{HHH+13} a characterization of symbolic ultrametrics
in graph theoretical terms have been given. 
It has been shown that there is a symbolic representation $(T,t)$
of such an edge-colored graph $K_{|V|}$ if the edges of each cycle of length $3$
have at most two colors and each mono-chromatic subgraph $G_m$, i.e., subgraphs 
that consist of all the edges having a fixed color $m$, are so-called cographs.
Cographs are characterized by the absence of induced
paths $P_4$ on four vertices, although there are a number of equivalent characterizations
of cographs (see e.g.\ \cite{Brandstaedt:99} for a survey).
Moreover, Lerchs \cite{lerchs71,lerchs72}
showed that each cograph $G=(V,E)$ is associated with a unique rooted tree $T(G)$,  
called \emph{cotree}.

In this contribution we address several combinatorial problems that
are concerned with symbolic ultrametrics and tree representations of
arbitrary, possibly edge-colored graphs. 

We first investigate in Section \ref{sec:suc}, under what conditions it is possible to find a symbolic
ultrametric for arbitrary graphs $G$ so that edges and non-edges of $G$ can be
distinguished. In other words, we ask for an edge-coloring of $G$ so that edges
and non-edges always obtain different colors and this coloring satisfies the conditions 
of a symbolic ultrametric. If such a coloring is known for $G$, then
one can immediately display the topology of $G$ as a tree via a symbolic representation
$(T,t)$. It does not come as a big surprise, when we prove that 
such a symbolic ultrametric can only be defined for
$G$ if and only if $G$ is already a cograph.  This, in particular, establishes
another new characterization of cographs.
As a consequence we can infer that any symbolic representation 
$(T,t)$ of a cograph $G$ is a so-called refinement of its cotree. 

In practice, however, symmetric maps $d\colon V\times V\to \Mo$ represent often only estimates of the true relationship $\de$
between the investigated objects, e.g., genes \cite{Lechner:11a,Lechner:14}.  
Thus, in general such estimates $d$ will not be a symbolic ultrametric. 
Hence, there is a great interest in optimally editing $d$ to a
symbolic ultrametric $\de$, i.e., finding a minimum number of 
changes of the assignment $d(x,y)\in \Mo$ to pairs $(x,y)$ so that there
is a symbolic representation of the resulting map $\de$ \cite{HLS+14}. 
So-far, the complexity of this problem has been unknown. 
In Section \ref{sec:sue}, we show that
(the decision version of) this problem, called \textsc{Symbolic Ultrametric Editing}, is NP-complete. 
Additionally, we show that the problems \textsc{Symbolic Ultrametric Completion}
and \textsc{Symbolic Ultrametric Deletion} are NP-complete and 
provide integer linear program (ILP) formulations
in order to efficiently solve the latter three problems in future work.

A further combinatorial problem we consider in Section \ref{sec:crd} is
motivated by the results established in Section \ref{sec:suc} where we have characterized
graphs for which one can find symbolic representations by means of cographs.
However, not all graphs are cographs and thus, don't have such a tree
representation. Therefore, we ask for the minimum number of cotrees that are
needed to represent the structure of a given graph $G=(V,E)$ in an unambiguous
way. As it will turn out, this problem is equivalent to find a decomposition
$\Pi = \{E_1,\dots, E_k\}$ of $E$ (i.e., the elements of $\Pi$ need not
necessarily be disjoint) for the least integer $k$, so that each subgraph $G_i =
(V,E_i)$,\ $1\leq i\leq k$ is a cograph. Such a decomposition is called cograph
edge $k$-decomposition, or cograph $k$-decomposition, for short. If the elements
of $\Pi$ are in addition pairwise disjoint, we call $\Pi$ a cograph
$k$-partition. We show that the number of such optimal cograph $k$-decomposition,
resp., partitions on a graph can grow exponentially in the number of vertices.
Moreover, non-trivial  upper bounds for the integer $k$ such that there is a 
cograph $k$-decomposition, resp., partition are derived and 
polynomial-time algorithms 
to compute $\Pi$ with $|\Pi|\leq \Delta+1$, where $\Delta$ denotes the maximum
number of edges a vertex is contained in, are provided.
Furthermore, we will prove that finding the least integer $k\geq 2$ so that $G$ has a
cograph $k$-decomposition or a cograph $k$-partition is an NP-hard problem.
In order to attack this problem in future work, we derive ILP 
formulations to solve this problem efficiently. These
findings complement results known about so-called cograph vertex partitions
\cite{Achlioptas199721,GN:02,DMO:14,MSC-Zheng:14}.

\section{Essential Definitions}

\noindent
\emph{Graph.}
In what follows, we consider undirected simple graphs $G=(V,E)$ with vertex
set $V(G) = V$ and edge set $E(G) = E\subseteq \binom{V}{2}$. The \emph{complement graph} $G^c=(V,E^c)$ of
$G=(V,E)$, has edge set $E^c= \binom{V}{2}\setminus E$.
The graph $K_{|V|} = (V,E)$ with $E=\binom{V}{2}$ is called \emph{complete graph}. 
A graph $H=(W,F)$ is an \emph{induced subgraph} of $G=(V,E)$, if $W\subseteq V$
and all edges $[x,y]\in E$ with $x,y\in W$ are contained in $F$.
The \emph{degree} $\deg(v)=|\{e\in E\mid v\in e\}|$ of a vertex $v\in V$
is defined as the number of edges that contain $v$. The maximum degree 
of a graph is denoted with $\Delta$.

\bigskip
\noindent
\emph{Rooted Tree.}
A connected graph $T$ is a \emph{tree}, if $T$ does not contain cycles. A
vertex of a tree $T$ of degree one is called a \emph{leaf} of $T$ and all other
vertices of $T$ are called \emph{inner} vertices. 
The set of inner vertices of $T$ is denoted by $V^0$ and with $E^0$ 
we denote the set of \emph{inner} edges, that are the edges in $E$ where 
both of its end vertices are inner vertices. 
A \emph{rooted tree}
$T=(V,E)$ is a tree that contains a distinguished vertex $\rho_T\in V$
called the \emph{root}. 
The first inner vertex  $\lca_{T}(x,y)$ that lies on both unique paths from 
distinct leaves $x$, resp., $y$ to the root, is called \emph{lowest
common ancestor of $x$ and $y$}.
If there is no danger of ambiguity, we will write $\lca(x,y)$ rather then	
$\lca_T(x,y)$. 

\bigskip
\noindent
\emph{Symbolic Ultrametric and Symbolic Representation.}
In what follows, the set $M$ will always denote a non-empty finite
set, the symbol $\varnothing$ will always denote a special
element not contained in $M$, and $\Mo := M\cup\{\varnothing\}$. 
Now, suppose $X$ is an arbitrary non-empty set and 
$\delta:X \times X \to \Mo$ a map.
We call $\delta$ a \emph{symbolic ultrametric}
if it satisfies the following conditions:
\begin{enumerate}
  \item[(U0)] $\delta(x,y)=\varnothing$ if and only if $x=y$; 
  \item[(U1)] $\delta(x,y) = \delta(y,x)$ for all $x,y \in X$, i.e.\ $\delta$ is
				symmetric; 
  \item[(U2)] $|\{\delta(x,y), \delta(x,z),\delta(y,z)\}| \le 2$
    for all $x,y,z \in X$; and 
  \item[(U3)] there exists no subset $\{x,y,u,v\}\in {X\choose 4}$ such that
  $    \delta(x,y)= \delta(y,u)  = \delta(u,v) \neq 
      \delta(y,v)= \delta(x,v)  = \delta(x,u).$
\end{enumerate}
Now, suppose that $T=(V,E)$ is a rooted tree with leaf set $X$ 
and that $t:V\to \Mo$ is a map 
such that $t(x)=\varnothing$ for all $x\in X$.
To the pair $(T,t)$ we associate the map 
$d_{(T,t)}$ on $X \times X$ by setting, for all $x,y \in X$,
\begin{equation*}
d_{(T,t)}: X \times X \to \Mo;  d_{(T,t)}(x,y) = t(\lca_{T}(x,y)).
\end{equation*}
%
We call the pair
$(T,t)$ a \emph{symbolic representation} of a map $\delta:X \times X \to \Mo$, 
if $\delta(x,y)=d_{(T,t)}(x,y)$ holds for all $x,y\in X$. 
For a subset $W\subseteq X\times X$ we denote with 
$\de(W) = \{m\in\Mo\mid\exists x,y\in W \text{\ s.t\ } \de(x,y)=m\}$
the set of images of the elements contained in $W$.

	We say that $(T, t)$ and $(T',t')$ are isomorphic if 
	$T$ and $T'$ are isomorphic via a map
   $\varphi:V(T)\to V(T')$ such that $t'(\varphi(v))= t(v)$ holds 
	for all $v\in V(T)$.


\bigskip
\noindent
\emph{Cographs and Cotrees.} 
Complement-reducible graph, cographs for short, 
are defined as the class of graphs formed from a single vertex under 
the closure of the operations of union and complementation, namely: (i) 
a single-vertex graph is a cograph;
(ii) the disjoint union of cographs is a cograph; 
(iii) the complement of a cograph is a cograph. 
Alternatively, a cograph can be defined as  a 
$P_4$-free graph
(i.e.\ a graph such that no four vertices induce a subgraph that is a path
of length 3). A number of equivalent characterizations
of cographs can be found in \cite{Brandstaedt:99}.
It is well-known in the literature concerning cographs that,
to any cograph $G=(V',E')$, one can associate a canonical \emph{cotree}
$T(G)=(V,E)$. This is a rooted tree, leaf set $V\setminus V^0$ equal to the
vertex set $V'$ of $G$ and inner vertices that represent so-called "join"
and "union" operations together with a labeling map $t:V^0\to \{0,1\}$ such
$[x,y]\in E'$ if and only if $t(\lca(x,y)) = 1$, and
$t(v)\neq t(w_i)$ for all $v\in V^0$ and all children $w_1,\ldots, w_k\in V^0$
of $v$, (cf. \cite{Corneil:81}). We will call the pair $(T,t)$
\emph{cotree representation} of $G$.

\bigskip
\noindent
\emph{Cograph $k$-Decomposition and Partition, and Cotree Representation.}
Let  $G=(V,E)$ be an arbitrary graph. 
A decomposition $\Pi=\{E_1,\dots E_k\}$ of $E$ is a called \emph{(cograph) $k$-decomposition}, 
if each subgraph $G_i = (V,E_i)$,\ $1\leq i\leq k$ of $G$ is a cograph.
We call $\Pi$ a \emph{(cograph) $k$-partition} if $E_i\cap E_j=\emptyset$, for all 
distinct $i,j\in \{1,\dots,k\}$.
A $k$-decomposition $\Pi$ is called \emph{optimal}, if $\Pi$ has the least number $k$
of elements among all cograph decompositions of $G$. Clearly, for a cograph only
$k$-decompositions with $k=1$ are optimal.
A $k$-decomposition $\Pi=\{E_1,\dots E_k\}$ is \emph{coarsest}, if no elements
of $\Pi$ can be unified, so that the resulting decomposition is a 
cograph $l$-decomposition, with $l< k$. In other words, $\Pi$ is 
coarsest, if for all subsets $I\subseteq \{1,\dots,k\}$ with $|I|>1$
it holds that $(V, \cup_{i\in I} E_i)$ is not a cograph. 
Thus, every optimal $k$-decomposition is also always a coarsest one. 

A graph $G = (V,E)$ is \emph{represented by a set of cotrees} $\mathbb T = \{T_1,\dots,T_k\}$, 
each $T_i$ with leaf set $V$, if and only if for each edge $[x,y]\in E$ there is a tree $T_i\in \mathbb T$
with $t(\lca_{T_i}(x,y)) =1$.

\bigskip
\noindent
\emph{The Cartesian (Graph) Product}
$G\Box H$ has vertex set $V(G\Box H)=V(G)\times
V(H)$; two vertices $(g_1,h_1)$, $(g_2,h_2)$ are adjacent in $G\Box H$ if
$[g_1,g_2]\in E(G)$ and $h_1=h_2$, or $[h_1,h_2]\in E(H)$ and $g_1 =
g_2$. It is well-known that the Cartesian product is associative, 
commutative and that the single vertex graph $K_1$ serves as unit element
\cite{HIK-13,Hammack:11a}. Thus, the product $\Box_{i=1}^n G_i$ 
of arbitrary many factors $G_1,\dots, G_n$ is well-defined. 
For a given product $\Box_{i=1}^n G_i$, we define 
the $G_i$-layer $G_i^w$ of $G$ (through vertex $w$ that has coordinates $(w_1,\dots,w_n)$) 
as the induced subgraph with vertex set
$V(G_i^w)=\{v = (v_1,\dots,v_n)\in \times_{i=1}^n V(G_i) \mid v_j=w_j, \text{ for all } j\neq i\}$. 
Note, $G_i^w$ is isomorphic to $G_i$ for all $1\leq i\leq n$, $w\in V(\Box_{i=1}^n G_i)$. 
The \emph{$n$-dimensional hypercube} $Q_n$ or \emph{$n$-cube}, for short,  is the Cartesian product
$\Box_{i=1}^n K_2$. 

\section{Symbolic Ultrametrics and Cographs}
\label{sec:suc}

Symbolic ultrametrics and respective representations as event-labeled
trees, have been first characterized by B{\"o}cker and Dress \cite{BD98}.

\begin{theorem}[\cite{BD98,HHH+13}]\label{bd}
Suppose $\delta: V \times V\to\Mo$ is a map. Then there is a 
symbolic representation of $\delta$ 
if and only if $\delta$ is a symbolic ultrametric. 
Furthermore, this representation 
can be computed in polynomial time. 
\end{theorem}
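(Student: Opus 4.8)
The statement combines a direct verification with a constructive recursion, so I would treat the two implications separately. For the forward direction, assuming a symbolic representation $(T,t)$ exists, I would simply check (U0)--(U3) against the identity $\de(x,y)=t(\lca_T(x,y))$. Conditions (U0) and (U1) are immediate from the facts that leaves carry the label $\varnothing$ while the $\lca$ of two distinct leaves is an inner vertex (labeled in $M$), and that $\lca$ is symmetric. For (U2) the essential observation is that in a rooted tree two of the three pairwise lowest common ancestors $\lca(x,y),\lca(x,z),\lca(y,z)$ always coincide, namely the one closest to the root, so at most two inner vertices and hence at most two labels occur among any triple. For (U3) I would run through the finitely many shapes of the rooted subtree spanned by four leaves $x,y,u,v$ and check that none realizes the alternating label pattern of (U3); this is routine but is the only place in this direction that requires a genuine case analysis.

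For the backward direction I would build the tree by recursion on $|V|$, the base cases $|V|\le 2$ being trivial. The inductive step hinges on a single \emph{separation lemma}: if $|V|\ge 2$ then there is a color $m_0\in M$ for which the graph $N_{m_0}=(V,\{[x,y]\mid \de(x,y)\neq m_0\})$ is disconnected. Granting this, let $V_1,\dots,V_r$ with $r\ge 2$ be the components of $N_{m_0}$; by construction every cross pair $x\in V_i$, $y\in V_j$ with $i\neq j$ satisfies $\de(x,y)=m_0$. Since any restriction of a symbolic ultrametric to a subset is again a symbolic ultrametric, the induction hypothesis yields representations $(T_i,t_i)$ of $\de|_{V_i\times V_i}$, which I would attach as subtrees of a new root labeled $m_0$. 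For $x,y$ in a common $V_i$ the value $\de(x,y)$ is reproduced inside $T_i$, and for $x,y$ in distinct parts $\lca_T(x,y)$ is the root and carries $m_0=\de(x,y)$, so $(T,t)$ indeed represents $\de$.

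The separation lemma is the main obstacle, and this is where (U2) and (U3) enter. Condition (U2) states exactly that the $\de$-coloring of the complete graph on $V$ contains no rainbow triangle, so Gallai's theorem applies and produces a partition of $V$ into at least two modules between which only one color is used and whose quotient is colored by at most two colors. It then remains to group these modules into the root partition, that is, to realize a single color across one cut. On the quotient, which is a $2$-colored complete graph, condition (U3) forbids a monochromatic induced $P_4$; equivalently each of its two color classes is a cograph, and a cograph on at least two vertices is disconnected or has disconnected complement. Hence one of the two color classes of the quotient (or its complement) is disconnected, and pulling this back through the modular partition yields a color $m_0$ with $N_{m_0}$ disconnected. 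Verifying that the Gallai step can always be carried out and that the resulting $m_0$ produces the desired monochromatic cut is the delicate part of the whole argument.

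Finally, for the polynomial-time claim, deciding whether $\de$ is a symbolic ultrametric costs $O(|V|^3)$ for (U2) and $O(|V|^4)$ for (U3), or linear time per color via cograph recognition of each $G_m$. The recursion above performs, at each of at most $|V|$ levels, a search over the colors together with a connected-components computation on the candidate graphs $N_{m}$, so the construction of $(T,t)$ is clearly polynomial as well.
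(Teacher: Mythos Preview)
The paper does not give its own proof of Theorem~\ref{bd}: the result is quoted from \cite{BD98,HHH+13} and used as a black box, so there is nothing in the present paper to compare your argument against line by line.

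That said, your sketch is a correct reconstruction of the standard proof and in fact aligns with the route taken in \cite{HHH+13}, which is summarized here as Theorem~\ref{thm:cograph}. The forward direction is exactly as you describe. For the backward direction, your ``separation lemma'' via Gallai's theorem is the key step, and it goes through: (U2) rules out rainbow triangles, Gallai then yields a nontrivial modular partition with an at most $2$-colored quotient, and on a $2$-colored quotient the forbidden configuration of (U3) is precisely a monochromatic induced $P_4$ (the three remaining pairs automatically carry the single other color), so each color class of the quotient is a cograph and hence one of them is disconnected. Pulling that disconnection back through the modules gives the color $m_0$ with $N_{m_0}$ disconnected, and the recursion proceeds. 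The only places where you should be a touch more explicit are the degenerate cases (quotient monochromatic, or $|V|\le 2$) and the observation that a monochromatic $P_4$ in the quotient really lifts to the (U3)-pattern on representatives; both are routine. Your polynomial-time analysis is also in line with what the cited references establish.
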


Now, let $\delta:V \times V \to \Mo$ be a map satisfying 
Properties (U0) and (U1). 
Clearly, the map $\delta$ can be considered as an edge coloring of 
a complete graph $K_{|V|}$, where each edge $[x,y]$ obtains
color $\delta(x,y)$.
For each
fixed $m\in M$, we define the undirected graph $G_m:=G_m(\delta)=(V,E_m)$ 
with edge set
\begin{equation}
E_m =  
\left\{ [x,y] \mid \delta(x,y)=m, \,\, x,y \in V \right\}.
\end{equation}
Hence, $G_m$ denotes the subgraph of the edge-colored graph 
$K_{|V|}$, that contains all edges colored with $m\in M$. 
The following result establishes the connection between
symbolic ultrametrics and cographs.

\begin{theorem}[\cite{HHH+13}]\label{thm:cograph}
  Let $\delta:V\times V\to \Mo$ be a map satisfying Properties 
  (U0) and (U1). Then 
  $\delta$ is a symbolic ultrametric if and only if 
  \begin{itemize}
  \item[(U2')] For all $\{x,y,z\}\in\binom{V}{3}$ there is an $m\in M$ 
	such that
    $E_m$ contains (at least) two of the three edges $[x,y]$, $[x,z]$, and
    $[y,z]$. In other words, for each triangle induced by $x,y$ and $z$, the edges have at most $2$ different colors 
  \item[(U3')] $G_m$ is a cograph for all $m\in M$. 
  \end{itemize}
\end{theorem}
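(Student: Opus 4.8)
The plan is to prove the equivalence by decoupling the two pairs of conditions: first show that, under (U0) and (U1), condition (U2) is equivalent to (U2'), and then show that, in the presence of (U2), condition (U3) is equivalent to (U3'). Since Properties (U0) and (U1) are common to both formulations, establishing these two separate equivalences suffices.

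For the first equivalence I would argue by a simple pigeonhole observation. Given three distinct vertices $x,y,z$, Property (U0) guarantees that $\delta(x,y),\delta(x,z),\delta(y,z)\in M$, i.e.\ none of them equals $\varnothing$. Condition (U2) says these three images take at most two distinct values, which by pigeonhole is equivalent to saying that at least two of them coincide in a common color $m\in M$; this is exactly (U2') phrased in terms of the graphs $G_m$. This step is routine.

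The substantive content is the equivalence of (U3) and (U3') under (U2). For the direction (U3')$\Rightarrow$(U3) I would argue contrapositively: a violation of (U3) on some $\{x,y,u,v\}$ with $\delta(x,y)=\delta(y,u)=\delta(u,v)=m$ and $\delta(y,v)=\delta(x,v)=\delta(x,u)=m'\neq m$ means that the induced subgraph of $G_m$ on these four vertices consists of exactly the edges $[x,y],[y,u],[u,v]$ (the remaining three pairs being colored $m'\neq m$ and hence absent from $E_m$). This is an induced path $x$--$y$--$u$--$v$, i.e.\ a $P_4$ in $G_m$, contradicting (U3'). Notably, this direction does not use (U2) at all.

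The reverse direction (U3)$\Rightarrow$(U3') is where (U2') does the real work, and I expect it to be the main obstacle. Again arguing contrapositively, suppose some $G_m$ contains an induced $P_4$, say $a$--$b$--$c$--$d$, so that $\delta(a,b)=\delta(b,c)=\delta(c,d)=m$ while $\delta(a,c),\delta(a,d),\delta(b,d)\neq m$. The goal is to show that these three ``missing'' edges must all carry one and the same color $m'$, which then reproduces the forbidden (U3) pattern with $(x,y,u,v)=(a,b,c,d)$. To pin down the colors I would apply (U2') to two carefully chosen triangles: in $\{a,c,d\}$ the edge $[c,d]$ has color $m$ while the other two are $\neq m$, so the coinciding pair forced by (U2') must be $\delta(a,c)=\delta(a,d)=:m'$; applying (U2') to $\{a,b,d\}$ then forces $\delta(b,d)=\delta(a,d)=m'$ in the same way. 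Once $\delta(a,c)=\delta(a,d)=\delta(b,d)=m'\neq m$ is established, the six colors on $\{a,b,c,d\}$ match the (U3)-forbidden configuration verbatim, giving the contradiction. The only real care needed is the bookkeeping of which triangle forces which equality; after selecting $\{a,c,d\}$ and $\{a,b,d\}$, the argument closes immediately.
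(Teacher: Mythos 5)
The paper does not prove this statement at all: Theorem~\ref{thm:cograph} is imported verbatim from \cite{HHH+13}, so there is no in-text argument to compare against. Your proposal is nevertheless a correct, self-contained proof: the pigeonhole equivalence of (U2) and (U2') under (U0) is right (the non-distinct triples being vacuous), the observation that a forbidden (U3)-configuration is literally an induced $P_4$ in $G_m$ needs no further hypothesis, and in the converse direction the two triangle applications of (U2') to $\{a,c,d\}$ and then $\{a,b,d\}$ correctly force $\delta(a,c)=\delta(a,d)=\delta(b,d)=m'\neq m$, reproducing the (U3) pattern with $(x,y,u,v)=(a,b,c,d)$.
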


Assume now, we have given an arbitrary none edge-colored
graph $G=(V,E)$ and we want to represent the topology of $G$ 
as a tree. The following question then arises:

\smallskip
\emph{Under which conditions is it possible to define an coloring on the edges
\emph{and} non-edges of $G$, so that edges $e\in E$ obtain a different color
then the non-edges $e\in  E^c$ of $G$
and, in particular, so that the resulting map $\delta$ is a symbolic ultrametric?}
\smallskip

In other words, we ask for an edge-coloring of $G$ so that there is a 
tree $(T,t)$ with $t(\lca_T(x,y))=m$ if and only if the (non)edge
$[x,y]$ obtained color $m$ and that edges and non-edges of $G$
can be distinguished by this coloring, that is, edges and non-edges
never obtain the same color. 
For an example of such an edge-colored graph $G$, 
see Figure \ref{fig:symbRepr}. 
The following theorem gives necessary and sufficient conditions on the structure of
graphs $G$ for which one
can find such a coloring and, in addition, provides a new characterization of cographs.

\begin{figure}[tbp]
  \centering
  \includegraphics[bb=  82 567 472 699, scale=0.85]{./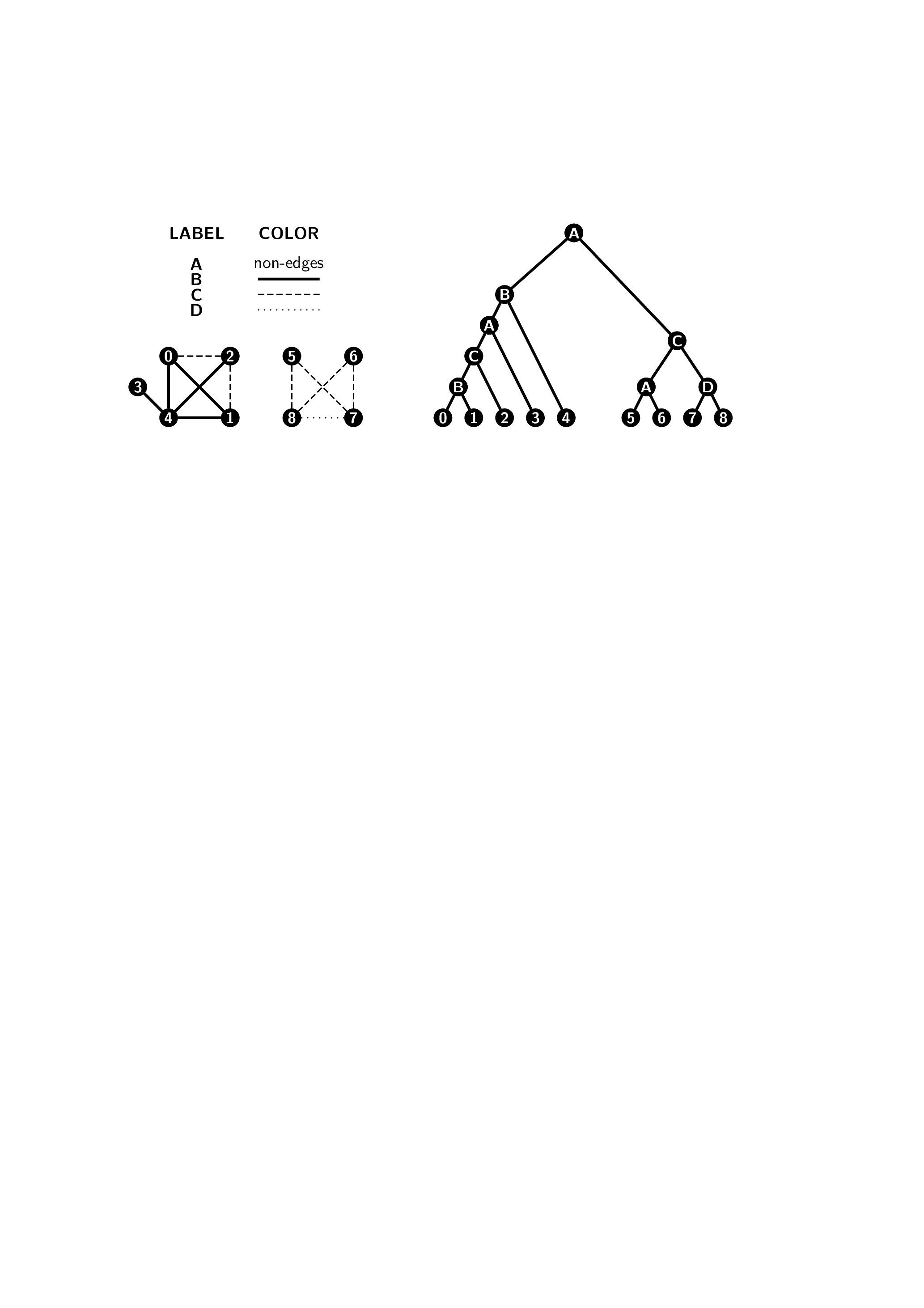} 
  \caption{Shown is a disconnected edge-colored graph $G$ in the lower left part. 
			  The edge-colors are identified with the labels $B$, $C$  and $D$, as indicated in	
				the upper left part. Non-edges are identified
				with the label $A$. It is easy to verify that the event-labeled tree 
				on the right hand side $(T,t)$  is a tree representation of $G$, 
				since for all distinct leaves $i$ and $j$ we have  $\lca_{T}(i,j) = X\in \{A,B,C,D\}$
				if and only if the (non-)edge $[i,j]$ has the color identified with the respective label $X$. \newline
				In particular, $G$ is a cograph and its cotree representation can be obtained by replacing the label $A$ by $0$, 
					all other labels $B,C,D$ by $1$ and additional contraction of the interior edges
					$[C,B]$ and $[C,D]$. 	  }
  \label{fig:symbRepr}
\end{figure}

\begin{theorem}
	Let $G=(V,E)$ be an arbitrary 
	(possibly disconnected) graph, $W= \{(x,y)\in V\times V \mid [x,y]\in E\}$
	and $W^c= \{(x,y)\in V\times V \mid [x,y]\not\in E\}$.

	There is a symbolic ultrametric $\delta:V\times V\to \Mo$
	s.t.\ $\de(W)\cap \de(W^c)=\emptyset$ 
	if and only if 
	$G$ is a cograph.
	\label{thm:symb-cograph}
\end{theorem}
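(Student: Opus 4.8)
The plan is to prove the two implications separately, dispatching the easy direction first and concentrating the real work on the forward direction.

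For the ``if'' direction (assume $G$ is a cograph), I would exhibit an explicit separating symbolic ultrametric using only two colors. Put $M=\{0,1\}$ and set $\delta(x,x)=\varnothing$, $\delta(x,y)=1$ whenever $[x,y]\in E$, and $\delta(x,y)=0$ whenever $x\neq y$ and $[x,y]\notin E$. Then $\de(W)\subseteq\{1\}$ and $\de(W^c)\subseteq\{0,\varnothing\}$ are disjoint, and Properties (U0), (U1) hold by construction. To see that $\delta$ is a symbolic ultrametric I would invoke Theorem \ref{thm:cograph}: condition (U2') holds automatically because with only two colors any three edges of a triangle must repeat a color, and condition (U3') holds because $G_1=G$ and $G_0=G^c$ are both cographs (the complement of a cograph being a cograph).

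For the ``only if'' direction I would argue by contraposition: assuming $G$ is not a cograph, $G$ contains an induced $P_4$ on vertices $a,b,c,d$ with edges $[a,b],[b,c],[c,d]$ and non-edges $[a,c],[b,d],[a,d]$. The hypothesis reads $\{\delta(a,b),\delta(b,c),\delta(c,d)\}\cap\{\delta(a,c),\delta(b,d),\delta(a,d)\}=\emptyset$, i.e.\ every edge color differs from every non-edge color. Applying (U2) to the triangles $\{a,b,c\}$ and $\{b,c,d\}$ then forces $\delta(a,b)=\delta(b,c)=\delta(c,d)=:\mu$, since in each such triangle the single non-edge color is distinct from both edge colors while (U2) permits at most two colors. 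Symmetrically, the triangles $\{a,c,d\}$ and $\{a,b,d\}$ force $\delta(a,c)=\delta(b,d)=\delta(a,d)=:\nu$, with $\mu\neq\nu$ by the separation hypothesis. Taking $(x,y,u,v)=(a,b,c,d)$ now gives $\delta(x,y)=\delta(y,u)=\delta(u,v)=\mu\neq\nu=\delta(y,v)=\delta(x,v)=\delta(x,u)$, which is exactly the configuration forbidden by (U3); this contradiction shows $G$ is $P_4$-free, hence a cograph.

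The main obstacle is the bookkeeping in the forward direction: one must keep careful track of which pairs among $\{a,b,c,d\}$ are edges and which are non-edges, and observe that the color-separation hypothesis is precisely what collapses the three edge colors to a common $\mu$ and the three non-edge colors to a common $\nu$ through repeated use of (U2). Once this monochromatic collapse is secured, recognizing that the induced $P_4$ instantiates the forbidden (U3) quadruple is immediate. A minor point worth checking is the behaviour on degenerate graphs (empty or complete $G$), where one of $\de(W),\de(W^c)$ is empty and the separating condition is satisfied trivially.
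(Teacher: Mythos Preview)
Your proof is correct. The ``if'' direction coincides with the paper's argument verbatim. In the ``only if'' direction, however, you and the paper take dual routes through the axioms. The paper first invokes (U3') to conclude that the three $P_4$-edges cannot all share a color (otherwise the corresponding monochromatic subgraph $G_m$ would contain an induced $P_4$), then observes that on one of the sub-triangles $\{a,b,c\}$ or $\{b,c,d\}$ the two edges receive distinct colors, and finally uses the separation hypothesis to give the non-edge a third color, violating (U2'). You instead start from (U2): on each triangle with two edges and one non-edge the separation hypothesis forces the two edges to agree, and symmetrically for triangles with two non-edges; this collapses all edge colors to $\mu$ and all non-edge colors to $\nu$, producing exactly the forbidden (U3) configuration on $(a,b,c,d)$. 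Your approach works directly with the ultrametric axioms (U2)/(U3) rather than the graph-theoretic reformulation (U2')/(U3') of Theorem~\ref{thm:cograph}, and it makes transparent why an induced $P_4$ is precisely the (U3)-obstruction; the paper's argument is marginally shorter since it terminates at a single triangle rather than exhausting all four.
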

\begin{proof}
	First assume that $G$ is a cograph. Set $\de(x,x)  = \varnothing$ for 
	all $x\in V$ and set $\de(x,y)=\de(y,x)=1$ if  
	$[x,y]\in E$ and, otherwise, to $0$. 
	Hence, condition $(U0)$ and $(U1)$ are fulfilled. 
	Moreover, by construction $|M|=2$ and thus, Condition $(U2')$ is trivially
	fulfilled. Furthermore, since $G_1(\de)$ and its complement $G_0(\de)$
	are cographs, $(U3')$ is satisfied. Theorem \ref{thm:cograph} implies that
	$\de$ is a symbolic ultrametric. 
	
	Now, let $\delta:V\times V\to \Mo$ be a symbolic ultrametric with 
	$\de(W)\cap \de(W^c)=\emptyset$. 
	Assume for contradiction that $G$ is not a cograph. Then $G$ contains 
	an induced path $P_4 = a-b-c-d$. Therefore, at least one edge $e$
	of this $P_4$
	must obtain a color $\de(e)$ different from the other two edges
	contained in this $P_4$, 
	as otherwise $G_{\de(e)}(\de)$ is not a cograph and 
	thus, $\de$ is not a symbolic ultrametric (Thm. \ref{thm:cograph}, $(U3')$ ).
	For all such possible maps $\de$ ``subdividing'' this 
	$P_4$ we always obtain that two edges of at least one of 
	the underlying paths $P_3 = a-b-c$ or $b-c-d$ 
	must have different colors. W.l.o.g. assume that
	$\de(a,b)\neq \de(b,c)$. Since $[a,c]\not\in E$
	and $\de(W)\cap \de(W^c)=\emptyset$ we can conclude that
	$\de(a,c)\neq \de(a,b)$ and 	$\de(a,c)\neq \de(b,c)$. 
	But then Condition $(U2')$ cannot be satisfied, and 
    Theorem \ref{thm:cograph} implies that $\de$ is not a
	symbolic ultrametric. \hfill \qed
\end{proof}

Theorem \ref{thm:symb-cograph} implies, that there is no hope for finding an
edge-distinguishing map $\delta$
for a graph $G$, that assigns symbols or colors to edges, resp., non-edges such that
for $\delta$ (and hence, for $G$) there is a symbolic representation $(T,t)$, unless $G$ is
already a cograph. 
However, this result does not come as a big surprise, as a
cograph $G$ is characterized by the existence of a unique (up to isomorphism)
cotree $(T',t')$ representing the topology of $G$. 
As a consequence of this result we can infer that 
any symbolic representation $(T,t)$ of a cograph $G$
is a refinement of the cotree representation $(T',t')$ of $G$, 
that is, the cotree representation $(T',t')$ of $G$ can be obtained from 
the symbolic representation $(T,t)$ of $\de$ by the following 
procedure:

\bigskip\noindent
First reset for each $v\in V$, 
\begin{equation*}
 t(v) = \begin{cases}
        \varnothing & \textrm{if}   \quad v\in V\setminus V^0 \text{, i.e, } v \text{\ is a leaf} \\
    1 & \textrm{if}\quad  v=\lca_T(x,y) \mbox{ and } \de(x,y)\in W\text{, i.e, } [x,y] \text{\ is an edge in $G$}  \\
    0 & \textrm{if}\quad \mbox{else}\text{, i.e, } [x,y] \text{\ is not an edge in $G$} 
   \end{cases}
\end{equation*}
Clearly, this new map $t$ on the tree $T$ defines a 
symbolic representation $(T,t)$ of the cograph $G=(V,E)$ so that
$[x,y]\in E$ if and only if $t(\lca_T(x,y))=1$.
However, it might be possible that there
is an edge $e=[u,v]\in E^0(T)$ such that $t(u)=t(v)$, and therefore, 
$(T,t)$ is not a cotree representation.
In this case, identify a new vertex $v_e$ with $e$ and define  
the tree $T_e= (V_e,E_e)$ with
vertex set $V_e = V(T) \setminus \{u,v\} \cup \{v_e\}$, edge set $E_e = E(T)
\setminus \{e\} \cup \{[v_e,w] \,:\, [w,u] \mbox{ or } [w,v] \in E \}$, 
that is again a rooted tree. 
Define for all $w\in V_e$ the map
\begin{equation}
t_e(w) = t(w) \textrm{ if } w \neq v_e  \textrm{ and } t(v_e) = t(u).
\end{equation} 
	This construction can be repeated, with $(T_e,t_e)$ now playing the role of $(T,t)$, until a
	we end in a rooted tree $\hat T=(\hat V,\hat E)$ with a map $\hat t:\hat V\to \Mo$ 
	so that for all edges
	$[u,v]\in  \hat E^0$ it holds that $\hat t(u)\neq \hat t(v)$.

\bigskip
With this procedure, we obtain a symbolic representation $(\hat T, \hat t)$
of the cograph $G$, also known as so-called \emph{discriminating} 
symbolic ultrametric \cite{HHH+13}.
In particular, this representation $(\hat T, \hat t)$
is unique (up to isomorphism) \cite[cf.\ Prop.\ 1]{HHH+13} and,
by construction, satisfies the condition of 
a cotree representation. Moreover, since the cotree representation $(T',t')$ is unique 
(up to isomorphism) \cite{lerchs71,lerchs72}, it follows that
that $(T',t')$ and $(\hat T, \hat t)$
must be isomorphic. We summarize this result in the following corollary.

\begin{corollary}
	Let $G=(V,E)$ be a cograph, $(T',t')$ be the corresponding cotree representation,	
	and $W$, resp., $W^c$ as defined in Theorem \ref{thm:symb-cograph}. 
	Moreover, assume that there is a symbolic ultrametric $\delta:V\times V\to \Mo$
	s.t.\ $\de(W)\cap \de(W^c)=\emptyset$ with $(T,t)$ being the corresponding
	symbolic representation of $\de$.

	Assume that the pair  $(\hat T, \hat t)$ is obtained from $(T,t)$ by application
	of the procedure above. 
	Then,  $(\hat T, \hat t)$ and $(T',t')$ are isomorphic.
\end{corollary}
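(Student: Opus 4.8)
The plan is to show that the tree $(\hat T,\hat t)$ produced by the displayed contraction procedure is, on the one hand, a genuine cotree representation of $G$, and on the other hand unique up to isomorphism; the corollary then follows by combining this with the uniqueness of the canonical cotree $(T',t')$. Concretely, I would proceed in three steps, invoking Theorem~\ref{thm:symb-cograph} only to guarantee that the hypotheses are non-vacuous (i.e.\ that such a $\de$ exists precisely when $G$ is a cograph).

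\emph{Step 1 (the relabeling yields a symbolic representation of $G$ with the $0/1$ convention).} First I would verify that after resetting $t$ by the three-case rule, the pair $(T,t)$ still satisfies $d_{(T,t)}(x,y)=t(\lca_T(x,y))$ and that $[x,y]\in E$ if and only if $t(\lca_T(x,y))=1$. This is where the hypothesis $\de(W)\cap\de(W^c)=\emptyset$ is essential: because edges and non-edges never share a color, the value of the original $\de$ on the pair $(x,y)$ unambiguously determines whether $[x,y]\in E$, so assigning $1$ exactly to those inner vertices that are $\lca$'s of edges is well defined (every pair of leaves realizing a given inner vertex as its $\lca$ falls consistently into $W$ or into $W^c$). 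I would spell out that this well-definedness is exactly what makes the case distinction in the displayed formula consistent.

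\emph{Step 2 (the contraction terminates in a cotree representation).} Next I would argue that each single contraction $T\mapsto T_e$ along an inner edge $e=[u,v]$ with $t(u)=t(v)$ preserves the induced map $d_{(T,t)}$ on leaf pairs: contracting an edge whose two endpoints carry the same label does not change any $\lca$-label, since for any two leaves whose lowest common ancestor was $u$ or $v$ the merged vertex $v_e$ inherits the common value $t(u)=t(v)$, while all other $\lca$'s are untouched. Hence $(T_e,t_e)$ is again a symbolic representation of the same $G$. Since each step strictly decreases the number of inner edges (equivalently the number of inner vertices) of a finite tree, the procedure halts, and by construction the terminal tree $(\hat T,\hat t)$ satisfies $\hat t(u)\neq\hat t(v)$ for every inner edge $[u,v]\in\hat E^0$. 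Together with the edge-characterization from Step~1, the properties listed in the definition of a cotree representation (root-to-child label alternation, $\hat t(x)=\varnothing$ on leaves, $[x,y]\in E\iff \hat t(\lca(x,y))=1$) are met, so $(\hat T,\hat t)$ \emph{is} a cotree representation of $G$, i.e.\ a discriminating symbolic ultrametric in the sense of \cite{HHH+13}.

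\emph{Step 3 (uniqueness and conclusion).} Finally I would invoke the two uniqueness facts already cited in the excerpt: the discriminating symbolic representation obtained this way is unique up to isomorphism \cite[Prop.~1]{HHH+13}, and the canonical cotree $(T',t')$ of a cograph is unique up to isomorphism \cite{lerchs71,lerchs72}. Since $(\hat T,\hat t)$ and $(T',t')$ are both cotree representations (equivalently, both discriminating symbolic representations) of the same cograph $G$, they must coincide up to isomorphism, which is precisely the claim. The main obstacle is Step~1: one must argue carefully that the $0/1$ relabeling is \emph{well defined} on inner vertices that arise as the $\lca$ of several different leaf pairs, and it is exactly the edge/non-edge color-disjointness $\de(W)\cap\de(W^c)=\emptyset$ that rules out an inner vertex being simultaneously the $\lca$ of an edge-pair and of a non-edge-pair; everything after that is the routine (but necessary) verification that contraction preserves the represented relation and terminates.
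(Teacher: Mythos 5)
Your proposal is correct and follows essentially the same route as the paper: relabel the symbolic representation with the $0/1/\varnothing$ convention (where $\de(W)\cap\de(W^c)=\emptyset$ guarantees well-definedness at each $\lca$), contract inner edges with equal labels until the tree is discriminating, and then conclude by combining the uniqueness of the discriminating representation \cite[Prop.~1]{HHH+13} with the uniqueness of the cotree \cite{lerchs71,lerchs72}. The only difference is that you spell out the well-definedness and termination arguments that the paper dismisses with ``clearly'' and ``by construction,'' which is a welcome amount of extra care rather than a deviation.
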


Assume that we want to find a symbolic ultrametric that can distinguish
between ``most of'' the edges and/or non-edges, however, the given graph is a non-cograph $G=(V,E)$.
Then, we are immediately left with the following problems. 

\begin{problem1}\textsc{Cograph Editing/Deletion/Completion}

\vspace{0.3em}
\begin{tabular}{ll}
	\emph{Input:}& Given a simple graph $G=(V,E)$ and an integer $k$.\\
	\emph{Question:}& Is there a cograph $G'=(V,E')$, s.t.\ \\
	                & $E'\subseteq {V\choose 2}$ and $|E \Delta E'| \leq k$ (Editing), \\
						 & $E'\subseteq E$ and $|E \setminus E'| \leq k$  (Deletion), or \\
						 & $E\subseteq E'$ and $|E' \setminus E| \leq k$  (Completion).
\end{tabular}
\end{problem1}

However, the (decision version of the) problem to edit a given graph $G$ into a
cograph $G'$, and thus, to find the closest graph $G'$ that has a symbolic
representation, is NP-complete \cite{Liu:11,Liu:12}.
In addition, the problems of deciding whether there is a cograph $G'$
resulting by adding, resp., removing $k$ edges from $G$ is NP-complete, as well
\cite{EMC:88}.

\begin{theorem}[Liu et al.\ \cite{Liu:12}, El-Mallah and Colbourn \cite{EMC:88}]
\textsc{Cograph Editing}, \textsc{Cograph Completion} 
and \textsc{Cograph Deletion} are NP-complete. 
\end{theorem}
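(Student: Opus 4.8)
The plan is to establish, for each of the three problems, both membership in NP and NP-hardness. Membership is the easy and uniform part: a certificate is simply the target edge set $E'$. Given $E'$, I would first check in linear time that $G'=(V,E')$ is $P_4$-free, i.e.\ a cograph (cograph recognition is well known to run in linear time, e.g.\ via the algorithm of Corneil, Perl and Stewart), and then verify the relevant budget constraint, namely $|E\triangle E'|\le k$ for editing, $|E\setminus E'|\le k$ for deletion, or $|E'\setminus E|\le k$ for completion. Since $E'$ has polynomial size and both tests are polynomial, all three problems lie in NP.

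For the hardness direction I would first exploit that the class of cographs is closed under complementation (property (iii) in the definition of cographs). This yields a parsimonious, budget-preserving reduction between two of the variants: adding a set $F$ of non-edges to $G$ produces a cograph if and only if deleting $F$ from the complement $G^c$ produces a cograph, because $(V,\,E\cup F)^{c}=(V,\,E^{c}\setminus F)$ and a graph is a cograph precisely when its complement is. Hence \textsc{Cograph Completion} on $G$ with budget $k$ is equivalent to \textsc{Cograph Deletion} on $G^c$ with the same budget $k$, so it suffices to prove NP-hardness of one of them and the other follows immediately.

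It then remains to prove NP-hardness of \textsc{Cograph Deletion} and of \textsc{Cograph Editing}. For this I would reduce from a suitable NP-complete problem — for instance a restricted variant of \textsc{3-SAT} or a constrained \textsc{Independent Set}/\textsc{Clique} problem — and build a gadget graph $G$ whose induced forbidden subgraphs are paths $P_4$ arranged so that a minimum-size set of edge deletions destroying every induced $P_4$ corresponds exactly to an optimal solution of the source instance. The gadgets must encode the combinatorial choices of the source problem (a truth value of a variable, or membership in a solution set) as the local decision of which edge of a $P_4$ to remove, while consistency gadgets propagate these choices and introduce no cheaper alternative for breaking the relevant paths. For \textsc{Cograph Editing} a separate reduction is required, since the shared budget for additions and deletions means editing hardness does not follow formally from deletion hardness; the gadgets must be robust in the sense that neither adding nor removing edges outside the intended set can push the modification count below the target threshold.

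The principal difficulty, and the step on which the whole argument turns, is the bidirectional correctness of the gadget reduction: one must prove soundness (a yes-instance of the source problem yields a modification of size at most $k$) and, more delicately, completeness (any modification of size at most $k$ forces a structure from which a solution of the source instance can be read off). The latter is subtle because cographs are highly structured — deleting one induced $P_4$ can create or destroy others — so the freedom to delete (or, for editing, to add) arbitrary edges must be controlled to prevent the adversary from beating the intended budget through cascading effects. Designing gadgets whose optimal local repair is provably the intended one, and whose interactions leave the global optimum exactly at the value dictated by the source optimum, is where essentially all of the work lies; the remaining bookkeeping (the polynomial bound on the instance size and the equivalence of budgets under complementation) is routine. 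Complete constructions along these lines are given by El-Mallah and Colbourn \cite{EMC:88} for deletion and completion and by Liu et al.\ \cite{Liu:12} for editing.
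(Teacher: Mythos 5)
This theorem is imported by the paper from the literature: the text offers no proof of its own, only the citations to El-Mallah and Colbourn \cite{EMC:88} (completion, deletion) and Liu et al.\ \cite{Liu:12} (editing), so there is no in-paper argument to compare yours against line by line. Within that constraint, the parts of your proposal that are actually carried out are correct: NP membership via the certificate $E'$, linear-time cograph ($P_4$-freeness) recognition, and the budget check is fine for all three variants; and the complementation duality is a genuine, budget-preserving polynomial reduction, since for $F\subseteq E^c$ one has $(V,E\cup F)^c=(V,E^c\setminus F)$ and cographs are closed under complementation, so \textsc{Cograph Completion} on $G$ with budget $k$ is equivalent to \textsc{Cograph Deletion} on $G^c$ with budget $k$. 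You are also right that editing hardness does not follow formally from deletion hardness and needs its own reduction.

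The genuine gap is that the NP-hardness of \textsc{Cograph Deletion} and of \textsc{Cograph Editing} --- which is the entire substance of the theorem --- is never established in your text: you describe what a gadget reduction would have to achieve (encoding choices of a source instance as which edge of an induced $P_4$ is removed, controlling cascading creation of new $P_4$'s) but you do not exhibit the gadgets or prove soundness and completeness, instead deferring to \cite{EMC:88} and \cite{Liu:12}. As a self-contained proof this is incomplete; as a treatment of the statement it coincides with what the paper does, namely cite those references. If the intent is to reprove the theorem rather than quote it, the missing step is precisely the construction and two-directional verification of the reductions, which you correctly identify as where all the work lies but do not supply.
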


In what follows, we will consider and discuss two modifications of the problem of finding
a symbolic ultrametric that can distinguish between edges and non-edges in Section
\ref{sec:sue} and \ref{sec:crd}: 
\begin{enumerate}
\item In Section \ref{sec:sue} we consider a couple of problems which are 
		of highly practical relevance:
      The symbolic ultrametric editing, completion and deletion problem. 
\item In contrast, if a non-edge colored graph $G$ is not a cograph and thus, if 
	   there is no single tree representation of $G$, then 
	   we ask for the minimum number of trees that are needed in order
	   to represent the topology of $G$ in an unambiguous way, 
		see Section \ref{sec:crd}. 
\end{enumerate}

\section{Symbolic Ultrametric Editing, Completion and Deletion}
\label{sec:sue}

Symbolic ultrametrics lie at the heart of many problems in phylogenomics.
Phylogenetic Reconstructions are concerned with the study of the evolutionary
history of groups of systematic biological units, e.g. genes or species. The
objective is the assembling of so-called phylogenetic trees or networks that
represent a hypothesis about the evolutionary ancestry of a set of genes,
species or other taxa.

Genes are passed from generation to generation to the offspring. Some of those
genes are frequently duplicated, mutate or get lost - a mechanism that also
ensures that new species can evolve. 
Crucial for the evolutionary reconstruction of species history is the knowledge
of the relationship between the respective genes. Genes that share
a common origin (homologs) are divided into three classes, namely orthologs,
paralogs, and xenologs \cite{Fitch2000}. Two homologous genes are orthologous if
at their most recent point of origin the ancestral gene complement is
transmitted to two daughter lineages; a speciation event happened. They are
paralogous if the ancestor gene at their most recent point of origin was
duplicated within a single ancestral genome; a duplication event happened.
Horizontal gene transfer (HGT) refers to the transfer of genes between organisms
in a manner other than traditional reproduction and across different species; if
such an event happened at the most recent point of origin of two genes, then
they are called xenologous.
Intriguingly, there are practical sequence-based methods that allow to
determine whether two genes $x$ and $y$ are orthologs or not
with acceptable accuracy \emph{without} constructing either gene or species trees 
\cite{Lechner:11a,Lechner:14}.

Now, assume we have given an estimate of genes being orthologs, 
paralogs or even xenologs, that is a map $d:X\times X\to \{\text{speciation,\ duplication,\ HGT}\}$. 
Then, one is interested in the representation of these estimates as
a tree $T$ with event-labeling $t$ so-that $t(\lca(x,y))=\text{speciation}$
iff $x,y$ are orthologs, $t(\lca(x,y))=\text{duplication}$ iff $x,y$ are paralogs and
$t(\lca(x,y))=\text{HGT}$ iff $x,y$ are xenologs. 
In practice, however, such maps $d$ are often only estimates of the true evolutionary 
relationship $\de$ between the investigated genes. 
Thus, in general such estimates $d$ will not be a symbolic ultrametric. 
Hence, there is a big interest in optimally editing $d$ to a
symbolic ultrametric $\de$.

The problem of editing a given symmetric map $d : X \times X \to \Mo$ to a
symbolic ultrametric is defined as follows:

\begin{problem1}\textsc{Symbolic Ultrametric Editing}

\vspace{0.3em}
\begin{tabular}{ll}
	\emph{Input:}& Given a symmetric map $d : X \times X \to \Mo$, s.t.\ \\
					&  $d(x,y)=\varnothing$ if and only if $x=y$.\\
	\emph{Question:}& Is there a symbolic ultrametric $\delta : X \times X \to \Mo$, s.t.\ for \\
						& $D = \{(x,y)\in X\times X \mid d(x,y) \neq \delta(x,y)\}$ we have $|D| \leq k$.
\end{tabular}
\end{problem1}

A further problem arising from the latter considerations is as follows. 
Assume we have an assignment of a symmetric subset $R$ of $X\times X$ so that for
all $(x,y)\in R$ the assignment $d(x,y)$ is believed to be an reliable estimate 
and thus, which is not allowed to be changed. 
Moreover, let $X\times X\setminus R$
be the pairs $(x,y)$ for which an assignment $d(x,y)$ is not known.
Assume that $M=\{1,\dots,n\}$ and $\Mo = M\cup\{\varnothing,0\}$, then we can 
extend the map $d\colon X\times X\to \Mo$ so that
\begin{equation*}
 d(x,y) = \begin{cases}
       	 \varnothing & \textrm{if} \quad  x=y \\
    d(x,y) & \textrm{if}\quad  (x,y)\in R  \\
    0 & \textrm{if}\quad (x,y)\in X\times X\setminus R 
   \end{cases}
\end{equation*}
We then ask to change the assignment of a minimum number of pairs $(x,y)$ 
with $d(x,y)=0$ to some element in $m\in M, m\neq 0$ so that the resulting map is a symbolic ultrametric. 
In other words, only non-reliable estimates of pairs $(x,y)$ are allowed to be changed.

\begin{problem1}\textsc{Symbolic Ultrametric Completion}

\vspace{0.3em}
\begin{tabular}{ll}
	\emph{Input:}& Given a symmetric map $d : X \times X \to \Mo$, s.t.\ \\
					&  $d(x,y)=\varnothing$ if and only if $x=y$.\\
	\emph{Question:}& Is there a symbolic ultrametric $\delta : X \times X \to \Mo$ s.t.\ \\
						 &  if $d(x,y)\neq 0$, then $\delta(x,y)=d(x,y)$; and  $|D| \leq k$, where  \\
						 & $D = \{(x,y)\in X\times X \mid d(x,y) \neq \delta(x,y)\}$.
\end{tabular}
\end{problem1}

Conversely, one might ask 
to change a minimum number of assignments $d(x,y)\neq 0$ to $\de(x,y)=0$.

\begin{problem1}\textsc{Symbolic Ultrametric Deletion}

\vspace{0.3em}
\begin{tabular}{ll}
	\emph{Input:}& Given a symmetric map $d : X \times X \to \Mo$, s.t.\ \\
					&  $d(x,y)=\varnothing$ if and only if $x=y$.\\
	\emph{Question:}& Is there a symbolic ultrametric $\delta : X \times X \to \Mo$ s.t.\ \\
						 & $\delta(x,y)=d(x,y)$ or $\delta(x,y) = 0$; and  $|D| \leq k$, where  \\
						& $D = \{(x,y)\in X\times X \mid d(x,y) \neq \delta(x,y)\}$.
\end{tabular}
\end{problem1}



%

\subsection{Computational Complexity}

In this section, 
we prove the NP-completeness of \textsc{Symbolic Ultrametric Editing},  
\textsc{Symbolic Ultrametric Completion} and \textsc{Symbolic Ultrametric Deletion}.

\begin{theorem}
\textsc{Symbolic Ultrametric Editing} is NP-complete. 
\label{thm:CE-NPC}
\end{theorem}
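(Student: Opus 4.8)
The plan is to prove membership in NP first and then establish NP-hardness by a polynomial reduction from \textsc{Cograph Editing}, whose NP-completeness is guaranteed by the theorem of Liu et al.\ and El-Mallah and Colbourn stated above.

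For membership in NP I would take a candidate map $\delta$ as the certificate. Verifying that $\delta$ is a symbolic ultrametric is polynomial: Conditions (U0) and (U1) are checked directly, and by Theorem \ref{thm:cograph} it suffices to test (U2') by inspecting all $\binom{|X|}{3}$ triples and (U3') by checking, for each color $m\in M$, that $G_m(\delta)$ is a cograph via linear-time cograph recognition applied to the at most $\binom{|X|}{2}$ nonempty color classes. Finally $|D|\le k$ is verified by counting the pairs on which $d$ and $\delta$ differ. Hence the problem lies in NP.

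For NP-hardness, given an instance $(G=(V,E),k)$ of \textsc{Cograph Editing}, I would build the symmetric map $d\colon V\times V\to \Mo$ with $M=\{0,1\}$ by setting $d(x,x)=\varnothing$, $d(x,y)=1$ whenever $[x,y]\in E$, and $d(x,y)=0$ otherwise, and ask whether $d$ can be edited into a symbolic ultrametric with at most $2k$ changes. This is clearly polynomial, and since $d$ is symmetric with $d(x,y)=\varnothing$ iff $x=y$, it is a legal input. For one direction of the equivalence, if $G$ can be turned into a cograph $G'=(V,E')$ with $|E\,\Delta\,E'|\le k$, then the $2$-coloring $\delta$ of $G'$ (assign $1$ to edges of $G'$, $0$ elsewhere, $\varnothing$ on the diagonal) is a symbolic ultrametric by the first part of Theorem \ref{thm:symb-cograph}, and $\delta$ differs from $d$ exactly on the ordered pairs coming from $E\,\Delta\,E'$, so $|D|=2|E\,\Delta\,E'|\le 2k$.

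The converse is the step I expect to be the main obstacle, precisely because an edited $\delta$ is a priori free to introduce colors other than $0$ and $1$, so it need not obviously correspond to a $2$-coloring of a cograph. The key idea is to recover a cograph from the color class of $0$: given a symbolic ultrametric $\delta$ with $|D|\le 2k$, set $G'=(G_0(\delta))^c$, i.e.\ let $[x,y]\in E'$ iff $\delta(x,y)\ne 0$. By (U3') of Theorem \ref{thm:cograph} the graph $G_0(\delta)$ is a cograph, hence so is its complement $G'$ by closure of cographs under complementation. A pairwise cost comparison then closes the gap: on a non-edge of $G$ both problems are charged exactly when $\delta\ne 0$, whereas on an edge of $G$ the cograph-editing cost is charged only when $\delta=0$, an event contained in the event $\delta\ne 1$ charged by the symbolic problem; summing over unordered pairs gives $|E\,\Delta\,E'|\le |D|/2\le k$. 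Therefore $(G,k)$ is a yes-instance of \textsc{Cograph Editing} if and only if $(d,2k)$ is a yes-instance of \textsc{Symbolic Ultrametric Editing}, which completes the reduction and the proof.
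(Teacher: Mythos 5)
Your proposal is correct and follows essentially the same route as the paper: the same reduction from \textsc{Cograph Editing} via the $0/1$-colored map $d$, the same factor-$2$ accounting between ordered pairs in $D$ and unordered edge edits, and the same use of Theorem~\ref{thm:cograph} to certify both directions. The only (harmless) variation is in the converse, where you recover the cograph as $(G_0(\delta))^c$ and absorb possible extra colors by the containment of the event $\delta=0$ in $\delta\neq 1$, whereas the paper takes $G'=G_1(\delta)$ and handles the extra colors by an explicit four-way partition of $D$; both yield the same bound $|E\,\Delta\,E'|\le |D|/2$.
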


\begin{proof}
Given a symmetric map $\delta$ it can be verified in polynomial time, if
$\delta$ is a symbolic ultrametric: One can check Conditions (U2) and
(U3) individually for each of the $O(|X|^3)$ many combinations of $\{x,y,z\} \in {X
\choose 3}$ for (U2), and the $O(|X|^4)$ many combinations of $\{x,y,u,v\} \in {X
\choose 4}$ for (U3), respectively. Hence, $\textsc{Symbolic Ultrametric
Editing} \in NP$. We will show by reduction from \textsc{Cograph Editing} that
\textsc{Symbolic Ultrametric Editing} is NP-hard.

Let $G=(V,E)$ be an arbitrary simple graph. 
We associate with $G$ a map $d \colon V \times V \to\Mo$, where 
$M = \{0,1,\ldots,n\}$ is a non-empty
finite set s.t.\ $n\geq 1$ and thus, $0,1 \in M$. 
Let $\Mo := M \cup \{\varnothing\}$ and 
set for all $x,y\in V$:
\begin{equation*}
 d(x,y) = d(y,x) = \begin{cases}
       	 \varnothing & \textrm{if} \quad  x=y \\
            1 & \textrm{if}\quad  [x,y]\in E  \\
    			0 & \textrm{if}\quad [x,y]\notin E
   \end{cases}
\end{equation*}
Obviously, $d$ can be constructed in polynomial time.
In the following, we show, that given an integer $k$, there exists a solution of
the \textsc{Cograph Editing} problem for $G$ and integer $k$ if and only if there exists
a solution of the \textsc{Symbolic Ultrametric Editing} problem for $d$ and
integer $2k$.

First, we show that a solution of the \textsc{Symbolic Ultrametric Editing}
problem for $d$ and $2k$ can be constructed from a solution of the
\textsc{Cograph Editing} problem for $G$ and $k$. Let $G'=(V,E')$ be a cograph
with $|E \Delta E'| \leq k$. Furthermore let $\delta \colon V \times V \to \Mo$ be a
map, such that for all $x,y \in V$, 
\begin{equation*}
 \de(x,y) = \de(y,x) = \begin{cases}
	       	 \varnothing & \textrm{if} \quad  x=y \\
            1 & \textrm{if}\quad  [x,y]\in E'  \\
    			0 & \textrm{if}\quad [x,y]\notin E'
   \end{cases}
\end{equation*}
It is easy to verify that $\delta$ is a symbolic ultrametric by application of Theorem
\ref{thm:cograph}. 
It remains to show that for $D=\{(x,y)\in X\times X \mid d(x,y) \neq \delta(x,y)\}$ it holds
that $|D| \leq 2k$. Note that for all $x \in V$ we have 
$d(x,x)=\delta(x,x)=\varnothing$ and therefore $(x,x) \notin D$.
The set $D$ can be partitioned into the two subsets
\begin{align*}
        D_1&=\{(x,y) \mid d(x,y)=1 \wedge \delta(x,y)=0\} \text{ and }\\
        D_2&=\{(x,y) \mid d(x,y)=0 \wedge \delta(x,y)=1\}.
\end{align*}
Hence, $(x,y) \in D_1$ if and only if $[x,y] \in E \setminus E'$, 
and $(x,y) \in D_2$ if and only if $[x,y] \in E' \setminus E$.
As $(E \setminus E') \cup (E' \setminus E) = (E \Delta E')$
it holds that, $(x,y) \in D$ if and only if $[x,y] \in E \Delta E'$.
As $d$ and $\delta$ are symmetric, it also holds that $(x,y) \in D$ if and 
only if $(y,x) \in D$.
Hence, $[x,y] \in E \Delta E'$ if and only if $(x,y) \in D$ and $(y,x) \in D$.
This reflects the fact, that an edge edit $[x,y] \in E \Delta E'$ in $G$
corresponds to the two symmetric edits $(x,y), (y,x) \in D$ in $d$.
Therefore, $|D| = |\{(x,y) \mid d(x,y) \neq \delta(x,y)\}| = 2|E \Delta E'| \leq 2k$.

We continue to show that a solution of the \textsc{Cograph Editing} problem for
$G$ and $k$ can be constructed from a solution of the \textsc{Symbolic
Ultrametric Editing} problem for $d$ and $2k$. Let $\delta : V \times V \to \Mop$
be a symbolic ultrametric s.t.\ $|D| = |\{(x,y) \mid d(x,y) \neq \delta(x,y)\}|
\leq 2k$. 
Furthermore, let $G'=(V,E')$ be a simple graph, such that for all $x,y
\in V$ it holds that $[x,y] \in E'$ if and only if $\delta(x,y)=1$.
By Theorem \ref{thm:cograph} (U3') we have that $G'=G_1$ and hence, $G'$ is a
cograph. It remains to show that $|E\Delta E'|\leq k$. By construction, 
for all $x \in V$, $d(x,x)=\delta(x,x)=\varnothing$ and $[x,x] \notin E \Delta E'$.
Let $D = \{(x,y) \mid d(x,y) \neq \delta(x,y)\}$.
Note that for all distinct $x,y \in V$ it holds that $d(x,y) \in \{0,1\}$.
Hence, $D$ can be partitioned into the four subsets
\begin{align*}
D_1 &= \{(x,y) \mid d(x,y)=1 \wedge \delta(x,y)=0\},\\
D_2 &= \{(x,y) \mid d(x,y)=0 \wedge \delta(x,y)=1\},\\
D_3 &= \{(x,y) \mid d(x,y)=1 \wedge \delta(x,y)\in \Mop\setminus\{0,1\}\}, \text{ and }\\ 
D_4 &= \{(x,y) \mid d(x,y)=0 \wedge \delta(x,y)\in \Mop\setminus\{0,1\}\}.
\end{align*}
For these subsets of $D$ it holds that
if $(x,y) \in D_1$ then $[x,y] \in E \setminus E'$, and
if $(x,y) \in D_2$ then $[x,y] \in E' \setminus E$.
Furthermore, $\delta(x,y)\in \Mop\setminus\{0,1\}$ 
implies that $[x,y] \notin E'$ and it follows that
if $(x,y) \in D_3$ then $[x,y] \in E \setminus E'$, and
if $(x,y) \in D_4$ then $[x,y] \notin E \wedge [x,y] \notin E'$.
For all remaining $x,y \in V$, i.e., for which $d(x,y)=\delta(x,y)$, 
it holds that 
$[x,y] \notin E \setminus E'$ and $[x,y] \notin E' \setminus E$.
It follows that $[x,y] \in E \setminus E'$ if and only if $(x,y) \in D_1 \cup D_3$, and
$[x,y] \in E' \setminus E$ if and only if $(x,y) \in D_2$.
As before, due to the symmetry of the maps $d$ and $\delta$, two symmetric edits $(x,y), (y,x) \in D$ in $d$
correspond to at most one edge edit $[x,y] \in E \Delta E'$ in $G$.
Finally, $2|E \Delta E'| = 2|E \setminus E'| + 2|E' \setminus E|) = |D_1 \cup D_3| + |D_2| \leq |D| \leq 2k$.
Hence, $|E \Delta E'| \leq k$.

Thus, \textsc{Symbolic Ultrametric Editing} is NP-complete.\hfill \qed
\end{proof}

\begin{theorem}
\textsc{Symbolic Ultrametric Completion} is NP-complete. 
\label{thm:SUC-NPC}
\end{theorem}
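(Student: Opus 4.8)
The plan is to mirror the proof of Theorem \ref{thm:CE-NPC}, replacing \textsc{Cograph Editing} by \textsc{Cograph Completion} as the source of NP-hardness. Membership in NP is argued exactly as before: given a candidate map $\de$, one verifies in $O(|X|^4)$ time that $\de$ is a symbolic ultrametric by checking Conditions (U2) and (U3), and one checks in polynomial time both the side constraint ``$d(x,y)\neq 0 \Rightarrow \de(x,y)=d(x,y)$'' and the bound $|D|\le k$. Hence \textsc{Symbolic Ultrametric Completion} $\in NP$.

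For hardness I would reduce from \textsc{Cograph Completion}. Given an instance $(G=(V,E),k)$, I construct the symmetric map $d\colon V\times V\to\Mo$ over the two-element set $M=\{0,1\}$ by setting $d(x,x)=\varnothing$, $d(x,y)=1$ whenever $[x,y]\in E$, and $d(x,y)=0$ whenever $[x,y]\notin E$; this is computable in polynomial time. The key design choice is to let $M$ contain \emph{only} the two colors $0$ and $1$: since completion may change a pair only from $0$ to some $m\in M\setminus\{0\}$, restricting $M=\{0,1\}$ forces every permitted change to be $0\mapsto 1$, which is precisely the operation of adding an edge. I then claim that $(G,k)$ is a yes-instance of \textsc{Cograph Completion} if and only if $(d,2k)$ is a yes-instance of \textsc{Symbolic Ultrametric Completion}.

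For the forward direction, take a cograph $G'=(V,E')$ with $E\subseteq E'$ and $|E'\setminus E|\le k$, and define $\de$ by $\de(x,y)=1$ iff $[x,y]\in E'$ (and $\varnothing$ on the diagonal). Then $G_1(\de)=G'$ is a cograph and $G_0(\de)=(G')^c$ is a cograph, so (U3') holds; (U2') is trivial as $|M|=2$, and (U0),(U1) hold by construction, whence $\de$ is a symbolic ultrametric by Theorem \ref{thm:cograph}. Since $E\subseteq E'$, no $1$-entry of $d$ is altered, so the side constraint holds, and $D$ consists exactly of the ordered pairs corresponding to $E'\setminus E$, giving $|D|=2|E'\setminus E|\le 2k$. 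For the converse, let $\de$ be a symbolic ultrametric satisfying the side constraint with $|D|\le 2k$. Because $M=\{0,1\}$ and only $0$-entries may change, every change is $0\mapsto 1$; setting $E'=\{[x,y]\mid \de(x,y)=1\}$ therefore gives $E\subseteq E'$, and $G'=(V,E')=G_1(\de)$ is a cograph by (U3'). The altered pairs are exactly the ordered pairs over $E'\setminus E$, so $|E'\setminus E|=|D|/2\le k$, establishing the equivalence.

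The step requiring the most care is the restriction of the color set. The factor-of-two symmetry bookkeeping is routine (as in Theorem \ref{thm:CE-NPC}), but one must ensure that the extra freedom available in \textsc{Symbolic Ultrametric Completion}---namely the option of recoloring a $0$-pair to a color other than $1$---cannot yield a cheaper completion and thereby break faithfulness of the reduction. Fixing $M=\{0,1\}$ removes this freedom entirely, so that symbolic-ultrametric completions of $d$ correspond, up to the symmetry factor, bijectively to cograph completions of $G$. With this observation in place, the claimed equivalence and hence NP-completeness follow.
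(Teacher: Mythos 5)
Your proof is correct and follows essentially the same route as the paper: a reduction from \textsc{Cograph Completion} using the identical $0/1$-valued map $d$, with the same factor-of-two accounting between edge additions and symmetric pairs in $D$. The only cosmetic difference is that you fix $M=\{0,1\}$ to rule out recolorings to other symbols, whereas the paper keeps $M=\{0,1,\dots,n\}$ and disposes of such recolorings by splitting $D$ into the pairs recolored to $1$ and those recolored to $\Mop\setminus\{0,1\}$ (the latter contribute no edges to $E'$), so both variants go through.
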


\begin{proof}
It is shown analogously as in the proof of Theorem \ref{thm:CE-NPC}
that $\textsc{Symbolic Ultrametric Min Completion}\in NP$. 
We will show by reduction from \textsc{Cograph Completion} that
\textsc{Symbolic Ultrametric Completion} is NP-hard.

Let $G=(V,E)$ be an arbitrary simple graph. We associate to 
$G$ a map $d \colon V \times V \to
\Mo$ as defined in the proof of Theorem \ref{thm:CE-NPC}:
\begin{equation*}
 d(x,y) = d(y,x) = \begin{cases}
       	 \varnothing & \textrm{if} \quad  x=y \\
            1 & \textrm{if}\quad  [x,y]\in E  \\
    			0 & \textrm{if}\quad [x,y]\notin E
   \end{cases}
\end{equation*}

Let there be a solution $G'=(V,E')$ for the  \textsc{Cograph Completion} problem 
for $G$ and $k$, i.e., $E\subseteq E'$ and $|E'\setminus E|\leq k$. 
We show that that there is a solution for the \textsc{Symbolic Ultrametric Completion}
problem for $d$ and $2k$. Define the map $\delta \colon V \times V \to \Mo$ 
as in the proof of Theorem \ref{thm:CE-NPC}:
\begin{equation*}
 \de(x,y) = \de(y,x) = \begin{cases}
	       	\varnothing & \textrm{if} \quad  x=y \\
            1 & \textrm{if}\quad [x,y]\in E'     \\
    			0 & \textrm{if}\quad [x,y]\notin E'
   \end{cases}
\end{equation*}
Again, it is easy to verify that $\delta$ is a symbolic ultrametric by application of Theorem
\ref{thm:cograph}. Moreover, by construction $\de(x,y)=d(x,y)$ for all $x,y\in V$ whenever
$[x,y]\in E\subseteq E'$ and hence, for all $x,y\in V$ with $d(x,y)\neq 0$.

It remains to show that for $D=\{(x,y)\in X\times X \mid 0=d(x,y) \neq \delta(x,y)\}$ it holds
that $|D| \leq 2k$. 
Note that for all $x \in V$ we have 
$d(x,x)=\delta(x,x)=\varnothing$ and therefore $(x,x) \notin D$.
Moreover, \[D=\{(x,y) \mid d(x,y)=0 \wedge \delta(x,y)=1\}.\]
Hence,  $(x,y),(y,x) \in D$ if and only if $[x,y] \in E' \setminus E$.
Therefore, $|D|  = 2| E'| \leq 2k$.

We continue to show that a solution of the \textsc{Cograph Editing} problem for
$G$ and $k$ can be constructed from a solution of the \textsc{Symbolic
Ultrametric Editing} problem for $d$ and $2k$. Let $\delta : V \times V \to \Mop$
be a symbolic ultrametric s.t.\ $|D| \leq 2k$ and $\de(x,y)=d(x,y)$
if $d(x,y)\neq 0$. 
Furthermore, let $G'=(V,E')$ be a simple graph, such that for all 
$x,y\in V$ it holds that $[x,y] \in E'$ if and only if $\delta(x,y)=1$.
By Theorem \ref{thm:cograph} (U3') we have that $G'=G_1$ and hence, $G'$ is a
cograph. 
It remains to show that $|E'\setminus E|\leq k$. By construction, 
for all $x \in V$, $d(x,x)=\delta(x,x)=\varnothing$ and $[x,x] \notin E'$.
Note that for all distinct $x,y \in V$ it holds for the map associated
to $G$ that $d(x,y) \in \{0,1\}$.
Hence, $D$ can be partitioned into 
\begin{align*}
D_1 &= \{(x,y) \mid d(x,y)=0 \wedge \delta(x,y)=1\}, \text{ and }\\
D_2 &= \{(x,y) \mid d(x,y)=0 \wedge \delta(x,y)\in \Mop\setminus\{0,1\}\}.
\end{align*}
Thus,  if $(x,y),(y,x) \in D_1$, then $[x,y] \in E' \setminus E$. 
Therefore, $2(|E' \setminus E|)  = |D_1| \leq |D| \leq 2k$
and thus, $|E' \setminus E| \leq k$.

Hence, \textsc{Symbolic Ultrametric Completion} is NP-complete.\hfill \qed
\end{proof}

Using similar arguments as in the proof of Theorem \ref{thm:SUC-NPC}
we can infer the NP-completeness of \textsc{Symbolic Ultrametric Deletion}
by reduction from \textsc{Cograph Deletion}.

\begin{theorem}
\textsc{Symbolic Ultrametric Deletion} is NP-complete. 
\label{thm:SUD-NPC}
\end{theorem}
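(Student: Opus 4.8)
The plan is to mirror the reduction from \textsc{Cograph Completion} used for Theorem~\ref{thm:SUC-NPC}, replacing it with a reduction from \textsc{Cograph Deletion}. Membership in NP is immediate and follows exactly as before: given a candidate symbolic ultrametric $\delta$, one checks Conditions (U2) and (U3) over the $O(|X|^3)$ triples and $O(|X|^4)$ quadruples, and additionally verifies the deletion constraint $\delta(x,y)=d(x,y)$ or $\delta(x,y)=0$ for each pair; all of this is polynomial. So the work lies entirely in establishing NP-hardness.

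For the hardness part I would start from an arbitrary simple graph $G=(V,E)$ and associate to it the same map $d\colon V\times V\to\Mo$ as in the previous two proofs, namely $d(x,y)=\varnothing$ if $x=y$, $d(x,y)=1$ if $[x,y]\in E$, and $d(x,y)=0$ otherwise. The claim to prove is that $G$ admits a \textsc{Cograph Deletion} solution with parameter $k$ if and only if $d$ admits a \textsc{Symbolic Ultrametric Deletion} solution with parameter $2k$. For the forward direction, take a cograph $G'=(V,E')$ with $E'\subseteq E$ and $|E\setminus E'|\le k$, and define $\delta(x,y)=1$ if $[x,y]\in E'$ and $\delta(x,y)=0$ otherwise (with $\delta(x,x)=\varnothing$). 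By Theorem~\ref{thm:cograph} this $\delta$ is a symbolic ultrametric (only colors $0,1$ appear, so (U2') is trivial, and $G_1(\delta)=G'$ is a cograph so (U3') holds). The deletion constraint is respected because every changed pair has $d(x,y)=1$ edited to $\delta(x,y)=0$, i.e.\ it corresponds to an edge removed from $E$, never to a value created out of a $0$. Counting as in the earlier proofs, $|D|=2|E\setminus E'|\le 2k$.

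For the converse I would take a symbolic ultrametric $\delta$ with $|D|\le 2k$ satisfying the deletion constraint, set $E'=\{[x,y]\mid \delta(x,y)=1\}$, and invoke Theorem~\ref{thm:cograph}(U3') to conclude $G'=(V,E')=G_1(\delta)$ is a cograph. The key observation is that the deletion constraint $\delta(x,y)\in\{d(x,y),0\}$ forces $E'\subseteq E$: a pair with $d(x,y)=0$ can only keep value $0$, so it can never receive color $1$, whence every edge of $G'$ was already an edge of $G$. Thus this is a genuine \textsc{Cograph Deletion} instance. Partitioning $D$ according to whether $\delta$ takes the value $0$ or some color in $\Mop\setminus\{0,1\}$ (as in Theorem~\ref{thm:SUC-NPC}), one checks that $[x,y]\in E\setminus E'$ accounts for two symmetric pairs in $D$, giving $2|E\setminus E'|\le|D|\le 2k$ and hence $|E\setminus E'|\le k$.

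I expect the only genuinely delicate point to be the bookkeeping in the converse direction, where $\delta$ is allowed to assign colors other than $0$ and $1$ to a pair; one must be careful that such reassignments (of an original $d$-value $1$ to something in $M\setminus\{0,1\}$, or a $0$ left as $0$) are correctly attributed so that they either land inside $E\setminus E'$ or cancel out, yielding the clean inequality $2|E\setminus E'|\le|D|$. This is exactly the type of case analysis carried out in the proof of Theorem~\ref{thm:SUC-NPC}, so the obstacle is notational rather than conceptual, and the same argument applies \emph{mutatis mutandis}.
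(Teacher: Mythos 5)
Your proposal is correct and follows exactly the route the paper intends: the paper itself only states that Theorem~\ref{thm:SUD-NPC} follows ``using similar arguments as in the proof of Theorem~\ref{thm:SUC-NPC} by reduction from \textsc{Cograph Deletion},'' and your writeup is a faithful and complete instantiation of that sketch, including the key observation that the deletion constraint forces $E'\subseteq E$. The delicate bookkeeping you anticipate in the converse is in fact even easier than in the completion case, since the constraint $\delta(x,y)\in\{d(x,y),0\}$ rules out any colors in $\Mop\setminus\{0,1\}$ altogether, so $D$ consists exactly of the ordered pairs corresponding to $E\setminus E'$.
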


\subsection{Integer Linear Program}

We showed in \cite{HLS+14} that the cograph editing problem is amenable to
formulations as Integer Linear Program (ILP). We will extend these 
results here to solve the symbolic ultrametric editing/completion/deletion problem. 
Let $d:X\times X\to \Mo$ be an arbitrary symmetric map with $M=\{0,1,\ldots,n\}$
and $K_{|X|}=(X,E={X \choose 2})$ be the corresponding complete
graph with edge-coloring s.t.\ each edge $[x,y]\in E$ obtains color $d(x,y)=d(y,x)$.

For each of the three problems and hence, a given 
 symmetric map $d$ we define for each distinct $x,y\in X$ and $i\in M$
the binary constants $\mathfrak{d}^i_{x,y}$ with $\mathfrak{d}^i_{x,y}=1$ if and only if $d(x,y)=i$.
Moreover, we define the binary variables $E^i_{xy}$ for all $i \in M$ and $x,y \in X$
that reflect the coloring of the edges in $K_{|V|}$ of the final symbolic
ultrametric $\de$, i.e.,  $E^i_{xy}$ is set to $1$ if and only if 
$\de(x,y)=i$. 


In order to find the closest symbolic ultrametric $\de$, 
the objective function is to minimize the symmetric difference of the 
$d$ and $\de$ among all different symbols $i\in M$:
\begin{equation}
\min \sum_{i \in M} \Bigg( \sum_{(x,y) \in X} (1-\mathfrak{d}^i_{xy})E^i_{xy} + \sum_{(x,y) \in X} \mathfrak{d}^i_{xy}(1-E^i_{xy}) \Bigg)
\end{equation}
The same objective function can be used for
the symbolic ultrametric completion and deletion problem. 

\smallskip
In case of the the symbolic ultrametric completion
we must ensure that $\de(x,y)=d(x,y)$ for all $d(x,y)\neq0$.
Hence we set for all $x,y$ with $d(x,y)=i\neq 0$:
\begin{align}
E^i_{x,y} =1. 
 \label{eee1}
\end{align}

In case of  the symbolic ultrametric deletion
we must ensure that $\de(x,y)=d(x,y)$ or $\de(x,y)=0$
or, in other words, for all $d(x,y)=i\neq 0$ it must
hold that either $E^i_{xy}=1$ or $E^0_{xy}=1$
Hence, we set for for all for all $x,y\in V$:
\begin{align}
	E^0_{xy}=1, \mbox{ if } d(x,y) = 0 \text{, and }  E^i_{xy}+E^0_{xy}=1,  \text{ else.} \tag{\ref{eee1}'}
\label{eee2}
\end{align}

For the cograph editing problem we neither need Constraint
\ref{eee1} nor \ref{eee2}. 
However, for all three problems we need the following.

\smallskip
Each tuple $(x,y)$ with $x \neq y$ has exactly one value $i \in M$ assigned to it
which is expressed in the following constraint.
\begin{equation}
\sum_{i \in M} E^i_{x,y}=1 \text{ and }
E^i_{xy}-E^i_{yx}=0 \text{ for all } x,y \in X.
\end{equation}

In order to satisfy Condition (U2') and thus, that all induced triangles have at
most two colors on the edges we need this constraint.
\begin{equation}
E^i_{xy}+E^j_{yz}+E^k_{xz} \leq 2 \label{ilp:u2}
\end{equation}
for all ordered tuples $(i,j,k)$ of distinct $i,j,k \in M$ and pairwise distinct $x,y,z \in X$.

Finally, in order to satisfy Condition (U3')  and thus, that each mono-chromatic subgraph
comprising all edges with fixed color $i$ is a cograph, we need the following constraint
that forbids induced $P_4$'s.
\begin{equation}
E^i_{xy} + E^i_{yu} + E^i_{uv} - E^i_{xu} - E^i_{xv} - E^i_{yv} \leq 2
\end{equation}
for all $i \in M$ and all ordered tuples $(x,y,u,v)$ of distinct $x,y,u,v \in X$.

It is easy to verify that the latter ILP formulation needs 
$O(|M||X|^2)$ variables and $O(|M|^3|X|^3+|X|^4)$ constraints.

\section{Cotree Representation and Cograph $\boldsymbol{k}$-Decomposition}
\label{sec:crd}

If a given non-edge colored graph $G$ is not a cograph, then 
Theorem \ref{thm:symb-cograph} implies that one cannot define
an edge-distinguishing symbolic ultrametric, and thus, 
in particular no single tree representation of $G$.
Therefore, we are interested to represent the topology of $G$ in an unambiguous way
with a minimum number of trees.

Recollect, a graph $G = (V,E)$ is represented by a set of cotrees $\mathbb T = \{T_1,\dots,T_k\}$, 
if and only if for each edge $[x,y]\in E$ there is a tree $T_i\in \mathbb T$
with $t(\lca_{T_i}(x,y)) =1$. 

Note, by definition, each cotree $T_i$ determines a subset 
$E_i = \{[x,y]\in E \mid t(\lca_{T_i}(x,y))=1\}$ of $E$. Hence, the subgraph 
$(V,E_i)$ of $G$ must be a cograph. Therefore, in order to find the minimum number of cotrees
representing a graph $G$, we can equivalently ask for a decomposition 
$\Pi=\{E_1,\dots,E_k\}$ of $E$ so that each subgraph $(V,E_i)$ is a cograph, 
where $k$ is the least integer among all cograph decompositions of $G$. 
Thus, we are dealing with the following two equivalent problems.

\begin{problem1}\textsc{Cotree $k$-Representation}

\vspace{0.3em}
\begin{tabular}{ll}
	\emph{Input:}& Given a graph $G=(V,E)$ and an integer $k$ . \\
	\emph{Question:}& Can $G$ be represented by $k$ cotrees?
\end{tabular}
\end{problem1}

\begin{problem1}\textsc{Cograph $k$-Decomposition}

\vspace{0.3em}	
\begin{tabular}{ll}
  \emph{Input:}& Given a graph $G=(V,E)$ and an integer $k$. \\
\emph{Question:}& Is there a cograph $k$-decomposition of $G$?
\end{tabular}
\end{problem1}

Clearly, any cograph has an optimal $1$-decomposition, while
for cycles of length $>4$ or paths $P_4$ there is always an 
optimal cograph 2-decomposition. 
However, there are examples of graphs that even do not have
a cograph 2-decomposition, see Figure \ref{fig:3col}.
Moreover, as shown in Figure \ref{fig:exp}, 
the number of different 
optimal cograph $k$-decomposition on a graph can grow exponentially.
the next theorem provides  a non-trivial upper bound for the integer $k$ s.t.\ there 
is still a cograph $k$-decomposition for arbitrary graphs.

\begin{theorem}
For every graph $G$ with maximum degree $\Delta$
there is a cograph $k$-decomposition with 
$1\leq k\leq \Delta+1$ that can 
be computed in $O(|V||E| + \Delta(|V|+|E|))$ time.
Hence, any graph can
be represented by at most $\Delta+1$ cotrees.
\label{thm:k-delta}
\end{theorem}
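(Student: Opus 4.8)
\textbf{The plan.} The plan is to produce the required decomposition from a \emph{proper edge coloring} of $G$ and to read off the running time from a classical edge-coloring routine. The key observation is the following: if $F\subseteq E$ is a matching of $G$, then the subgraph $(V,F)$ is a cograph. Indeed, every vertex of $(V,F)$ has degree at most one, so $(V,F)$ is a disjoint union of single edges $K_2$ and isolated vertices $K_1$. Such a graph cannot contain an induced $P_4$, since every $P_4$ has an interior vertex of degree two; hence $(V,F)$ is $P_4$-free, i.e.\ a cograph.

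First I would apply Vizing's theorem, which guarantees that every graph $G$ with maximum degree $\Delta$ admits a proper edge coloring using at most $\Delta+1$ colors. Such a coloring partitions $E$ into color classes $E_1,\dots,E_k$ with $k\leq\Delta+1$, where each $E_i$ is a matching, as no two equally-colored edges share an endpoint. By the observation above, each $(V,E_i)$ is a cograph, so $\Pi=\{E_1,\dots,E_k\}$ is a cograph $k$-partition, and in particular a cograph $k$-decomposition, with $1\leq k\leq\Delta+1$ (the lower bound being trivial). To obtain the tree statement, I would note that each cograph $(V,E_i)$ has a cotree $T_i$ on leaf set $V$; by definition $[x,y]\in E_i$ iff $t(\lca_{T_i}(x,y))=1$, so $\mathbb{T}=\{T_1,\dots,T_k\}$ represents $G$, and $|\mathbb{T}|=k\leq\Delta+1$.

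For the running time, I would compute the proper $(\Delta+1)$-edge-coloring using the Misra--Gries realization of Vizing's augmenting-fan argument, which colors the edges one at a time in $O(|V|)$ amortized time each and hence runs in $O(|V||E|)$; this accounts for the first term of the claimed bound. It then remains to assemble a cotree from each color class. Since $(V,E_i)$ is a disjoint union of edges and isolated vertices, its cotree is built directly in $O(|V|+|E_i|)$ time: a ``union'' root whose children are the isolated leaves together with a two-leaf ``join'' gadget for each edge of $E_i$. Summing over the at most $\Delta+1$ classes and using $\sum_i|E_i|=|E|$, the total cotree-assembly cost is $O(\Delta(|V|+|E|))$, which gives the second term.

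The construction itself is short, and the only point that requires care is the complexity accounting. The principal step is to cite an edge-coloring algorithm whose guarantee matches the $O(|V||E|)$ term exactly, which the Misra--Gries algorithm provides, and then to confirm that the per-class cotree construction on the common leaf set $V$ fits within $O(\Delta(|V|+|E|))$. No nontrivial combinatorial obstacle arises beyond verifying these two bounds.
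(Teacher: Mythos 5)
Your proposal is correct and follows essentially the same route as the paper: a proper edge coloring via Vizing's theorem yields color classes that are matchings (hence $P_4$-free cographs), the Misra--Gries algorithm gives the $O(|V||E|)$ term, and building the at most $\Delta+1$ cotrees in linear time each gives the $O(\Delta(|V|+|E|))$ term. No discrepancies to report.
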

\begin{proof}
Consider a proper edge-coloring $\varphi:E\to \{1,\dots,k\}$ of $G$, 
i.e., an edge coloring such that no two incident edges obtain the same color.
Any proper edge-coloring using $k$ colors 
yields a cograph $k$-partition $\Pi= \{E_1,\dots,E_k\}$ where
$E_i=\{e\in E\mid \varphi(e)=i\}$,
because any connected component in $G_i =(V,E_i)$ 
is an edge and thus, no $P_4$'s are contained in $G_i$.
Vizing's Theorem \cite{V:64} implies that for each graph 
there is a proper edge-coloring using $k$ colors with $\Delta\leq k\leq \Delta+1$. 

An proper edge-coloring using at most $\Delta+1$ colors can be computed 
with the Misra-Gries-algorithm in $O(|V||E|)$ time \cite{MG:92}. 
Since the (at most $\Delta+1$) respective cotrees can be constructed in linear-time $O(|V|+|E|)$ \cite{corneil1985linear}, we
derive the runtime $O(|V||E| + \Delta(|V|+|E|))$. \hfill \qed
\end{proof}

\begin{figure}[tbp]
  \centering
  \includegraphics[bb= 182 515 409 694, scale=0.6]{./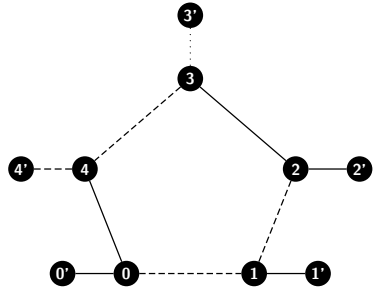} 
  \caption{Full enumeration of all possibilities (which we leaf to the reader), 
			shows that the depicted graph has no cograph 2-decomposition. 
			The existing 
		    cograph 3-decomposition is also a cograph 3-partition; 
			highlighted by dashed-lined, dotted and bold edges.} 
  \label{fig:3col}
\end{figure}

\begin{figure}[tbp]
  \centering
  \includegraphics[bb=   112 370 478 657, scale=0.5]{./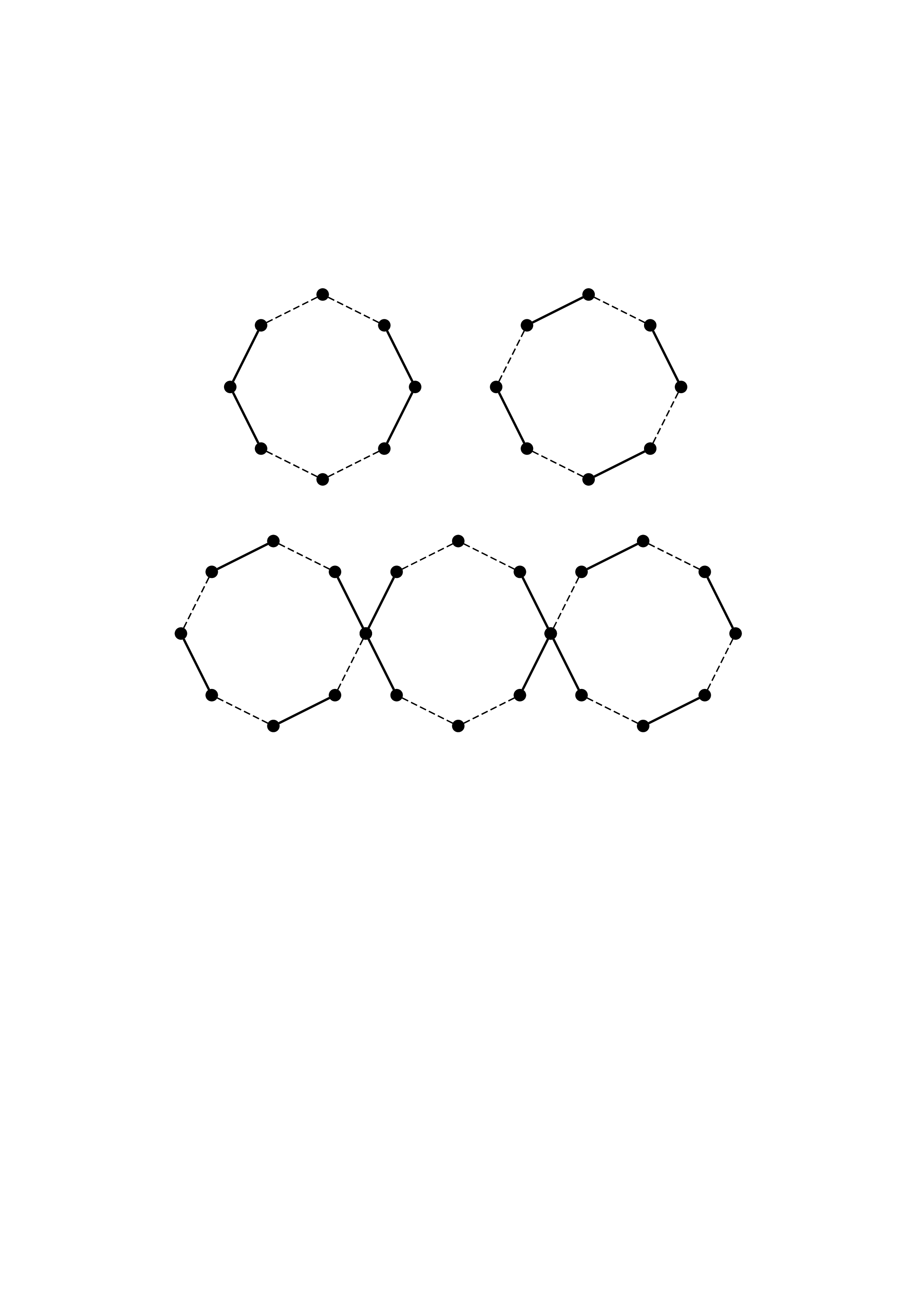} 
  \caption{Two isomorphic graphs with two non-equivalent optimal cograph 2-decomposition
			  (highlighted by dashed and solid edges) are shown in the upper part. 
			  By stepwisely identifying single vertices  one
           obtains a chain of graphs $G$, see lower part. For each subgraph that is a copy of the
           graph above, an optimal cograph 2-decomposition can be determined almost independently of the remaining
			  parts of the graph $G$. Hence, with an increasing number of vertices of such chains $G$
			  the number of different cograph 2-decompositions is growing exponentially.}
  \label{fig:exp}
\end{figure}

Obviously, any optimal $k$-decomposition must also be a coarsest
$k$-decomposition, while the converse is in general not true, see
Fig.\ \ref{fig:pd}.
The partition $\Pi= \{E_1,\dots,E_k\}$ 
obtained from a proper edge-coloring is usually not a coarsest one, 
as possibly $(V,E_J)$ is a cograph, where
$E_J=\cup_{i\in J} E_i$ and  $J\subseteq \{1,\dots,l\}$.
However, there are graphs having an \emph{optimal} cograph $\Delta$-decomposition, 
see Fig.\ \ref{fig:3col} and \ref{fig:exp}. 
Thus, the 
derived bound $\Delta+1$ is almost sharp.
Nevertheless, we assume that this bound can be sharpened: 

\begin{conjecture}
For every graph $G$ with maximum degree $\Delta$
there is a cograph $\Delta$-decomposition.
	\label{conj:k-partDelta}
\end{conjecture}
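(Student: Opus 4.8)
The plan is to reduce the problem to graphs of chromatic index $\Delta+1$ (\emph{Class 2} graphs) and then to show that one of the $\Delta+1$ color classes of a proper edge-coloring can be dissolved into the remaining $\Delta$ classes without destroying the cograph property.

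First I would dispose of the easy case. If $G$ is \emph{Class 1}, i.e.\ its chromatic index equals $\Delta$, then a proper $\Delta$-edge-coloring partitions $E$ into $\Delta$ matchings $M_1,\dots,M_\Delta$. As already observed in the proof of Theorem~\ref{thm:k-delta}, every matching is a cograph, so $\{M_1,\dots,M_\Delta\}$ is the desired cograph $\Delta$-decomposition. By Vizing's Theorem \cite{V:64} the only remaining case is that $G$ is Class 2, where a proper edge-coloring needs exactly $\Delta+1$ colors; fix such a coloring with matchings $M_1,\dots,M_{\Delta+1}$.

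The core of the argument is to redistribute the edges of the last matching $M_{\Delta+1}$ among the classes $E_1:=M_1,\dots,E_\Delta:=M_\Delta$, allowing each $E_i$ to grow from a matching into a \emph{star forest} (a disjoint union of stars), which is again $P_4$-free and hence a cograph. For an edge $[u,v]\in M_{\Delta+1}$, adding it to $E_i$ keeps $E_i$ a star forest provided at least one endpoint, say $u$, is a center-or-isolated vertex of $E_i$ and the other endpoint does not already lie on an incident $E_i$-edge that would extend a path to length three. Because $\deg(u)\le\Delta$, the vertex $u$ misses at least one of the colors $\{1,\dots,\Delta\}$, so there is always at least one class $i$ in which $u$ is free; this furnishes, for every edge of $M_{\Delta+1}$, a nonempty \emph{list} of admissible target classes. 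I would therefore phrase the redistribution as a list-assignment problem on $M_{\Delta+1}$ and attack it with a Vizing-style recoloring / augmenting-chain argument, invoking Vizing's adjacency lemma on a $\Delta$-critical subgraph to guarantee that these lists are large enough to be realized.

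The hard part, and the reason the statement is only conjectured, is global \emph{consistency}: two reassigned edges $[u,v]$ and $[a,b]$ placed into the same class $E_i$ can chain through an old matching edge of $E_i$ (for instance when $a$ is the $E_i$-partner of $v$) to create an induced $P_4$ of the form $u-v-a-b$. Avoiding all such chains turns the assignment into a constraint-satisfaction problem on an auxiliary conflict graph whose structure is governed by the interaction of $M_{\Delta+1}$ with the other matchings, and a naive greedy choice can fail. I expect the decisive step to be a Hall-type or discharging argument showing that, after suitable local Kempe-chain recolorings of the base coloring, the lists can always be satisfied without conflicts; checking this against the tight examples (Figures~\ref{fig:3col} and~\ref{fig:exp}, which already attain an optimal $\Delta$-decomposition) suggests the bound is sharp and that the true obstruction is the combinatorics of the merge rather than a counting or parity barrier.
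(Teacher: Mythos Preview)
The statement you are attempting to prove is labeled \emph{Conjecture}~\ref{conj:k-partDelta} in the paper; the authors do not give a proof, so there is no argument in the paper to compare your proposal against. What the paper does establish is the weaker bound $\Delta+1$ (Theorem~\ref{thm:k-delta}) via a proper edge-coloring, exactly as in your easy Class~1 case, and it then simply records the $\Delta$ bound as an open problem.

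As for the proposal itself, it is a reasonable programme but not a proof, and you say so yourself: the redistribution of the last matching $M_{\Delta+1}$ into the first $\Delta$ classes so that each $E_i$ stays $P_4$-free is precisely the open problem, and you only \emph{expect} a Hall-type or discharging argument to close it. Two concrete issues would need to be addressed before this could succeed. First, your target classes are restricted to \emph{star forests}, which is strictly stronger than ``cograph''; you would in effect be proving that every graph has star arboricity at most $\Delta$, and it is not clear that this auxiliary statement is any easier (or even true) than the conjecture itself. Second, the local observation that every endpoint of an $M_{\Delta+1}$-edge misses some color in $\{1,\dots,\Delta\}$ gives each edge a nonempty list, but the conflict structure you describe (two reassigned edges linked through an old matching edge forming a $P_4$) is global, and nothing in the outline controls it; invoking ``Vizing's adjacency lemma on a $\Delta$-critical subgraph'' or ``Kempe-chain recolorings'' is suggestive but not an argument. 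Until those steps are filled in, this remains a sketch of where the difficulty lies rather than a resolution of it.
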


\begin{figure}[tbp]
  \centering
  \includegraphics[bb= 6 416 435 651, scale=0.6]{./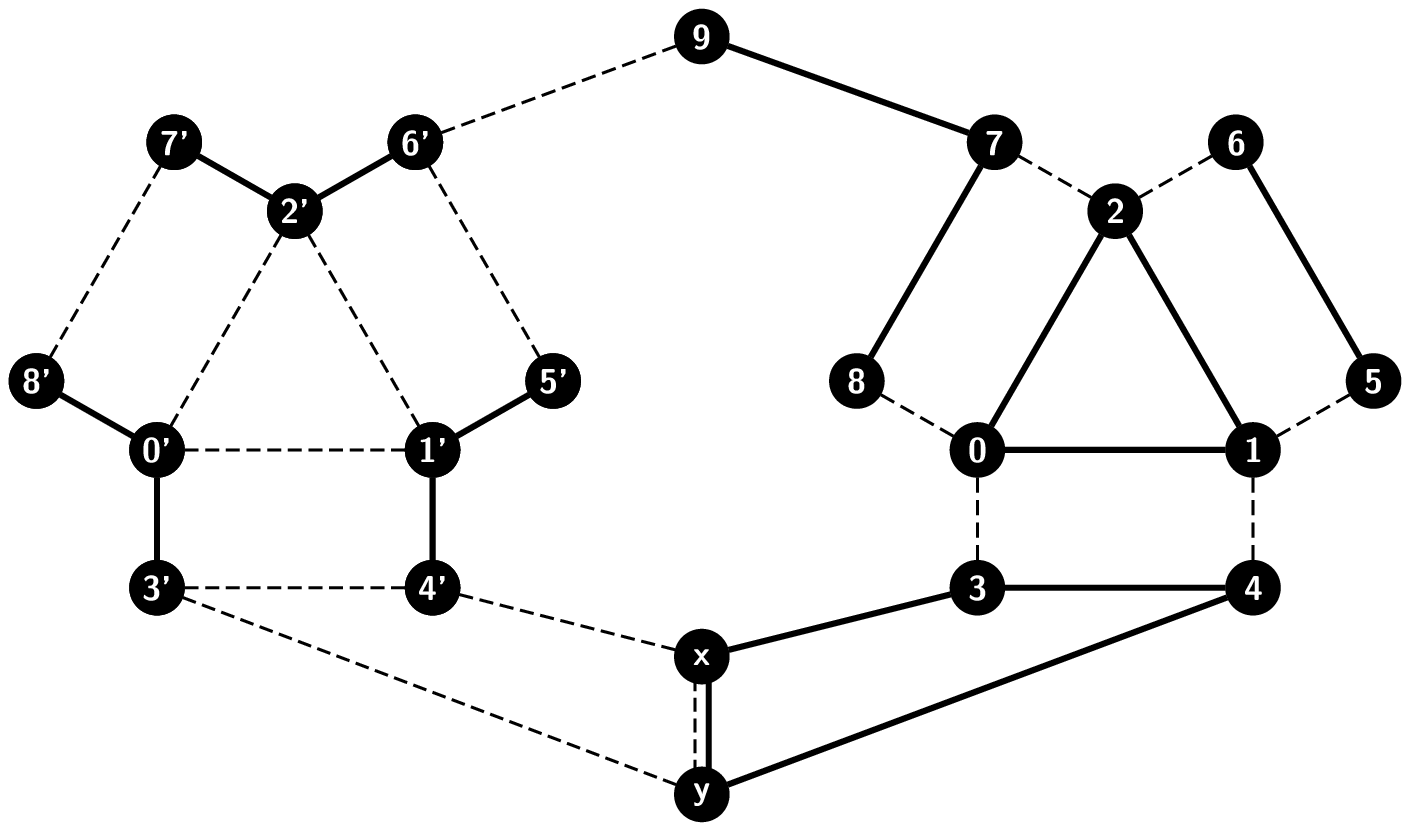}\\[0.3cm]  \centering
  \includegraphics[bb= 72 497 561 609, scale=0.7]{./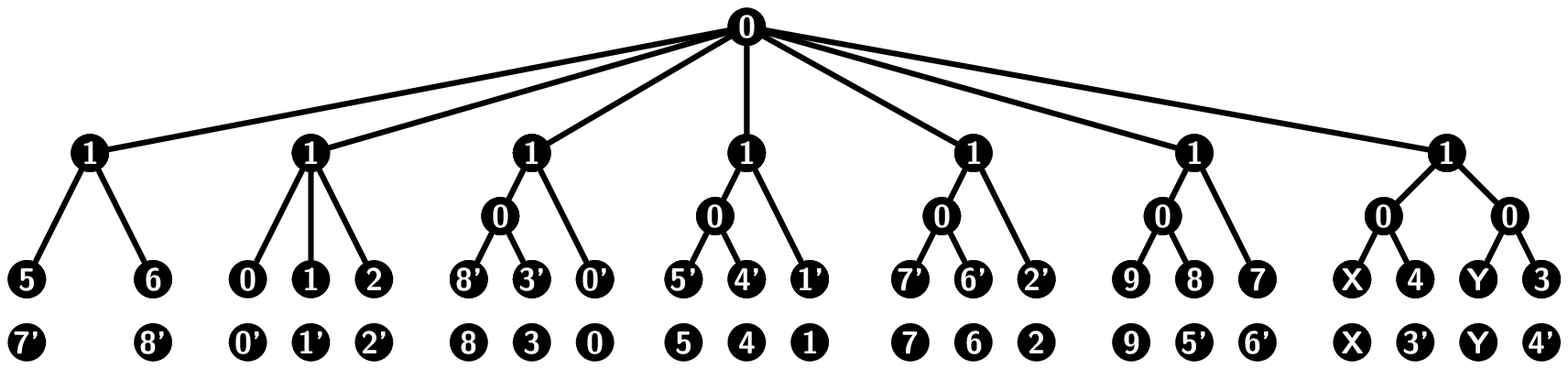}
  \caption{The shown (non-co)graph $G$  has a 2-decomposition $\Pi=\{E_1,E_2\}$.
						Edges in the different elements $E_1$ and $E_2$ are 
            highlighted by dashed and solid edges, respectively. Thus, 
            two cotrees, shown in the lower part of this picture, are sufficient to represent the structure of $G$.
						The two cotrees are isomorphic, and thus, differ only in the arrangement of their leaf sets. For this
		  	   reason, we only depicted one cotree with two different leaf sets. 
					  Note,  $G$ has no 2-partition, but a coarsest 3-partition. The latter can easily be
            verified by application of the construction in Lemma \ref{lem:col-lit}. 
}
  \label{fig:pd}
\end{figure}

However, there are examples of non-cographs containing many induced $P_4$'s that have 
a cograph $k$-decomposition with $k\ll\Delta+1$, which implies that
any optimal $k$-decomposition of those graphs 
will have significantly less elements than $\Delta+1$, see the following
examples. 

\begin{example}
	Consider the graph $G = (V,E)$ with vertex set $V=\{1,\dots,k\}\cup\{a,b\}$
	and $E =\{[i,j]\mid i,j\in \{1,\dots,k\}, i\neq j\}\cup\{[k,a],[a,b]\}$.
	The graph $G$ is not a cograph, since there are induced $P_4$'s of the form
	$i-k-a-b$, $i\in \{1,\dots,k-1\}$. On the other hand, the subgraph
	$H=(V, E\setminus \{[k,a]\})$ has two connected components, 
	one is isomorphic to the complete graph $K_k$ on $k$ vertices and the 	
	other to the complete graph $K_2$. Hence, $H$ is a cograph. 
	Therefore, $G$ has a cograph 2-partition $\{E\setminus \{[k,a]\}, \{[k,a]\}\}$, independent from $k$
	and thus, independent from the maximum degree $\Delta = k$. 
\end{example}

\begin{example}
 Consider the 2n-dimensional hypercube $Q_{2n}=(V,E)$ with maximum degree $2n$. 
 We will show that this hypercube has a
 coarsest cograph $n$-partition $\Pi=\{E_1,\dots,E_n\}$, which implies that for any optimal
 cograph $k$-decomposition of $Q_{2n}$ we have $k\leq \Delta/2$. 

 We construct now a cograph $n$-partition of $Q_{2n}$. Note, $Q_{2n} =
\Box_{i=1}^{2n} K_2 = \Box_{i=1}^n (K_2\Box K_2) = \Box_{i=1}^n Q_2$. In
order to avoid ambiguity, we write $\Box_{i=1}^n Q_2$ as $\Box_{i=1}^n
H_i$, $H_i\simeq Q_2$ 
and assume that $Q_2$ has edges $[0,1]$, $[1,2]$,
$[2,3]$, $[3,0]$. 
The cograph $n$-partition of $Q_{2n}$ is defined as $\Pi
= \{E_1,\dots, E_n\}$, where $E_i=\cup_{v\in V} E(H_i^v)$. In other words,
the edge set of all $H_i$-layers in $Q_{2n}$  
 constitute a single class $E_i$ in
the partition for each $i$. 
Therefore, the subgraph
$G=(V,E_i)$ consists of $n$ connected components, each component is isomorphic
to the square $Q_2$. Hence, $G_i=(V,E_i)$ is a cograph. 

Assume for contradiction that $\Pi = \{E_1,\dots, E_n\}$ is not a coarsest partition.
Then there are distinct classes $E_i$, $i\in I\subseteq \{1,\dots,n\}$ such
that $G_I=(V,\cup_{i\in I}E_i)$ is a cograph. W.l.o.g.\ assume that $1,2\in
I$ and let $v=(0,\dots,0)\in V$. Then, the subgraph $H_1^v\cup H_2^v
\subseteq Q_{2n}$ contains a path $P_4$ with edges $[x,v]\in E(H_1^v)$
and $[v,a], [a,b]\in E(H_2^v)$, where x=(1,0,\dots,0), a=(0,1,0\dots,0)
and $b=(0,2,0\dots,0)$. 
	By definition of the Cartesian product, there are no edges connecting
$x$ with $a$ or $b$ or $v$ with $b$ in $Q_{2n}$ and thus, this path $P_4$ is induced.
As this holds for all subgraphs $H_i^v\cup H_j^v$ ($i,j\in I$ distinct) and
thus, in particular for the graph $G_I$ we can conclude that classes of
$\Pi$ cannot be combined. Hence $\Pi$ is a coarsest cograph $n$-partition. 
\label{ex:hq}
\end{example}

Because of the results of computer-aided search for $n-1$-partitions and
decompositions of hypercubes $Q_{2n}$ 
we are led to the following conjecture:

\begin{conjecture}
	Let $k\in \mathbb N$ and $k>1$.
	Then the $2k$-cube has no cograph $k-1$-decomposition, i.e., 
	the proposed $k$-partition of the hypercube $Q_{2k}$  in Example \ref{ex:hq}  is also optimal.
	\label{conj:k-part2}
\end{conjecture}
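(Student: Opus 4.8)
The plan is to prove the lower bound ``every cograph decomposition of $Q_{2k}$ has at least $k$ parts''; combined with the $k$-partition constructed in Example~\ref{ex:hq} this yields optimality and hence the conjecture. I would first translate the cograph constraint into the metric language of the cube. Writing the Hamming distance of two vertices $u,v$ as the number of coordinates in which they differ, note that for any subgraph $(V,E_i)\subseteq Q_{2k}$ the graph distance dominates the Hamming distance, so by Theorem~\ref{thm:cograph}(U3') (equivalently, the $P_4$-free characterisation, using that $P_4$ has diameter $3$) \emph{every connected component of a cograph part spans vertices of pairwise Hamming distance at most $2$}. Since $Q_{2k}$ is bipartite and triangle-free, the only connected $4$-vertex induced subgraphs carrying a spanning path are the induced $P_4$ (three distinct coordinate-directions) and the square $Q_2$ (two directions), while the claw $K_{1,3}$ carries none; hence $(V,E_i)$ is a cograph precisely when no induced $P_4$ of $Q_{2k}$ has all three of its edges in $E_i$ and no $Q_2$ of $Q_{2k}$ has exactly three of its four edges in $E_i$. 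I would push this to a full classification of the \emph{connected} cograph subgraphs of $Q_{2k}$ as the diameter-$2$ clusters assembled from stars and squares, which pins down exactly what a single part can cover.

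With this in hand I would attempt an induction on $k$ through the factorisation $Q_{2k}=Q_2\,\Box\,Q_{2(k-1)}$, the base case $k=2$ being the elementary fact that $Q_4$ is not a cograph. Given a hypothetical cograph $(k-1)$-decomposition of $Q_{2k}$, the aim is to ``spend'' one part on the last coordinate-pair $\{2k-1,2k\}$ and to project the remainder onto the $Q_{2(k-1)}$-layers, manufacturing a cograph $(k-2)$-decomposition of $Q_{2(k-1)}$ and contradicting the induction hypothesis. Restricting each part to a single $Q_{2(k-1)}$-layer is harmless, since induced subgraphs of cographs are cographs, so the real content lies in an averaging/exchange step across the four parallel $Q_{2(k-1)}$-layers, showing that the edges running in the two new directions force at least one part to be usable only there. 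An alternative, more quantitative route is a fractional covering bound: by vertex- and direction-transitivity the linear program asking for the least total weight of connected cographs covering each edge once should collapse, after symmetrisation, to a small symmetric program, and I would try to exhibit a dual solution of value $k$.

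The main obstacle is exactly the gap between \emph{partition} and \emph{decomposition}: parts may overlap, and a connected cograph in $Q_{2k}$ can be large and spread across \emph{all} $2k$ directions (the full star $K_{1,2k}$ at a single vertex is already a cograph), so neither a naive edge count nor a direct layer-projection lowers the part-count by one. Consequently the induction cannot merely delete a coordinate; it must certify that some part is genuinely consumed by the additional square, and it is this exchange/averaging step --- equivalently, closing the integrality gap in the fractional formulation from a constant factor down to the sharp value $k$ --- that I expect to be the crux. The computer-aided search reported before the conjecture indicates the bound is tight for small $k$, but the argument ultimately hinges on finding an invariant that is monotone under passing from $Q_{2k}$ to $Q_{2(k-1)}$ and that forces the loss of precisely one cograph per two coordinates.
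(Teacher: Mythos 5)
The statement you address is given in the paper only as Conjecture~\ref{conj:k-part2}: the authors support it by computer-aided search for small $k$ and offer no proof, so there is no argument of theirs to compare yours against. Your proposal does not close the question either. The preparatory observations are sound: a connected $P_4$-free graph has diameter at most $2$ and subgraph distance dominates Hamming distance, so each connected component of a part lies in a Hamming ball of radius $2$; and since $Q_{2k}$ is bipartite and triangle-free, a part $(V,E_i)$ is a cograph exactly when no induced $P_4$ of $Q_{2k}$ has all three of its edges in $E_i$ and no square has exactly three of its four edges in $E_i$. This is correct and matches the ``pattern-avoiding word'' reformulation the authors themselves sketch after Conjecture~\ref{conj:k-part1}.

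The heart of the argument, however --- the claim that in any hypothetical cograph $(k-1)$-decomposition of $Q_{2k}$ one part is genuinely ``consumed'' by the two new coordinate directions, so that the remaining parts restrict to a cograph $(k-2)$-decomposition of a $Q_{2(k-1)}$-layer --- is exactly the step you flag as the crux and do not supply. As you yourself note, parts may overlap, and a single connected cograph (a star $K_{1,2k}$ at a vertex, or a vertex together with several squares through it) can meet all $2k$ directions; hence no part is forced to live only in the new directions, and restricting all $k-1$ parts to a layer merely yields a cograph $(k-1)$-decomposition of $Q_{2(k-1)}$, which contradicts nothing. The fractional relaxation and its dual solution of value $k$ are likewise only proposed, not exhibited, and one would additionally have to rule out an integrality gap. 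Without that exchange/averaging lemma or dual certificate the induction does not go through; your text is a research plan with a correctly identified but unresolved gap, and the statement remains open, consistent with its status in the paper.
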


The proof of the latter hypothesis would immediately verify the
next conjecture. 

\begin{conjecture}
	For every $k\in \mathbb N$ there is a graph that has
	an \emph{optimal} cograph $k$-decomposition. 
	\label{conj:k-part1}
\end{conjecture}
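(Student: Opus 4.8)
The plan is to prove the stronger Conjecture \ref{conj:k-part2}, since \ref{conj:k-part1} then follows immediately: Example \ref{ex:hq} already supplies a cograph $k$-partition, hence a cograph $k$-decomposition, of the hypercube $Q_{2k}$, so as soon as one knows that $Q_{2k}$ admits no cograph $(k-1)$-decomposition, that $k$-decomposition is optimal and $Q_{2k}$ is the desired witness for each $k\geq 1$.

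First I would pin down exactly which cographs can occur as (spanning) subgraphs $H=(V,E_i)$ of $Q_{2k}$. Two facts about the hypercube drive this: it is bipartite and triangle-free, so every subgraph is triangle-free; and any two of its vertices have at most two common neighbours, so it contains no $K_{2,3}$. Each connected component of $H$ is a connected $P_4$-free graph, i.e.\ a connected cograph, and (if it has at least two vertices) is therefore a join of smaller cographs. In a triangle-free graph a join forces each factor to be edgeless and forbids three or more factors, so every component is a complete bipartite graph $K_{a,b}$; the absence of $K_{2,3}$ then leaves only the stars $K_{1,b}$ ($b\geq 0$) and the square $K_{2,2}$ (a $2$-face of the cube). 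Establishing this structural dichotomy is the heart of the argument.

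With the dichotomy in hand the lower bound is a direct count, and this is where the argument pays off. For a star $K_{1,b}$ the edge count $b$ is strictly below the vertex count $b+1$, whereas a square has equally many edges and vertices; summing over the components, which partition $V$, yields $|E_i|\leq |V|=2^{2k}$ for every cograph $E_i\subseteq E(Q_{2k})$. Since $Q_{2k}$ is $2k$-regular we have $|E(Q_{2k})|=k\,2^{2k}$, so any family $E_1,\dots,E_t$ whose union is $E(Q_{2k})$ obeys $k\,2^{2k}=|E(Q_{2k})|\leq \sum_{i=1}^{t}|E_i|\leq t\,2^{2k}$, forcing $t\geq k$. Hence $Q_{2k}$ has no cograph $(k-1)$-decomposition, which is Conjecture \ref{conj:k-part2}, and combined with Example \ref{ex:hq} it carries an \emph{optimal} cograph $k$-decomposition, proving Conjecture \ref{conj:k-part1}. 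The only genuinely delicate point is the structural dichotomy; once the components are known to be stars and squares, the extremal inequality $|E_i|\leq|V|$, tight only for squares, finishes the proof. I note that this approach is tailored to $Q_{2k}$ and says nothing about sparser witnesses or about Conjecture \ref{conj:k-partDelta}.
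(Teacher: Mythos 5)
The paper does not prove this statement: it is posed as Conjecture~\ref{conj:k-part1}, explicitly described as appearing difficult, and is only noted to follow from the (also unproven) Conjecture~\ref{conj:k-part2}. So there is no proof of the authors' to compare against, and your argument has to be judged on its own merits. As far as I can check, it is correct, and it in fact settles the stronger Conjecture~\ref{conj:k-part2} as well. The structural dichotomy holds: a connected component of $(V,E_i)$ is an induced subgraph of the cograph $(V,E_i)$, hence a connected cograph, hence (having at least two vertices) a join of the complements of the co-components; triangle-freeness of $Q_{2k}$ forces exactly two join factors, both edgeless, so the component is complete bipartite $K_{a,b}$ with \emph{all} $ab$ edges present in $Q_{2k}$; and since two vertices of a hypercube have at most two common neighbours, $Q_{2k}$ contains no $K_{2,3}$ even as a non-induced subgraph, leaving only stars and $K_{2,2}$. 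The count $|E_i|\le\sum_{C}|V(C)|=2^{2k}$ against $|E(Q_{2k})|=k\,2^{2k}$ then gives $t\ge k$ for any cograph $t$-decomposition, and Example~\ref{ex:hq} supplies the matching $k$-partition, so $Q_{2k}$ is a witness for every $k$ (with $k=1$ trivial). The one point worth making explicit in a write-up is the distinction between subgraph and induced subgraph: the $K_{2,3}$ exclusion must be for subgraphs, which is exactly what the common-neighbour bound delivers. Your closing remark is also accurate --- the argument is tailored to $Q_{2k}$ and says nothing about Conjecture~\ref{conj:k-partDelta}.
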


Proving the last conjecture appears to be difficult.
We wish to point out that there is  a close relationship
to the problem of finding pattern avoiding words, see e.g.\
\cite{Br:05,BM:08,P:08,pudwell2008enumeration,bilotta2013counting,bernini2007enumeration}:
Consider a graph $G=(V,E)$ and an ordered list $(e_1,\dots,e_m)$ of the edges $e_i\in E$. 
We can associate to this list  $(e_1,\dots,e_m)$ a word $w=(w_1,\dots,w_m)$. 
By way of example, assume that we want to find a valid cograph 2-decomposition $\{E_1,E_2\}$ of $G$
and that $G$ contains an induced $P_4$ consisting of the edges $e_i,e_j,e_k$. 
Hence, one has to avoid assignments of the edges $e_i,e_j,e_k$
to the single set $E_1$, resp., $E_2$. The latter is equivalent to
find a binary word $(w_1,\dots,w_m)$  such that 
$(w_i,w_j,w_k) \neq (X,X,X)$, $X\in\{0,1\}$ for each of
those induced $P_4$'s.
The latter can easily be generalized
to find pattern avoiding words over an alphabet $\{1,\dots,k\}$ to get 
a valid $k$-decomposition. 
However, to the authors knowledge, results concerning the counting
of $k$-ary words, avoiding forbidden patterns and thus, verifying if there is any such word
(or equivalently a $k$-decomposition) are basically known for scenarios
like:
If  $(p_1,\dots p_l) \in \{1,\dots,k\}^l$ (often $l<3$), 
then  none of the words $w$ that contain
a subword $(w{_{i_1}},\dots,w{_{i_l}})=(p_1,\dots p_l)$ 
with $i_{j+1}=i_{j}+1$ (consecutive letter positions) or
$i_j<i_k$ whenever $j<k$ (order-isomorphic letter positions)  is allowed. 
However, such findings are to restrictive to our problem, since we are looking
for words, that have only on a few, but fixed positions of non-allowed patterns.
Nevertheless, we assume that results concerning the recognition of 
pattern avoiding words might offer an avenue to solve the latter conjectures.

\subsection{Computational Complexity}

In the following, we will prove the NP-completeness of
\textsc{Cotree 2-Representation} and \textsc{Cotree 2-Decomposition}. 
Additionally, these results allow to show that the problem of determining whether
there is cograph 2-partition is NP-complete, as well. 

We start with two lemmata concerning cograph 2-decompositions of the
graphs shown in Fig.\ \ref{fig:literal} and \ref{fig:clause}.

\begin{lemma}
For the literal and extended literal graph in 
Figure \ref{fig:literal} every cograph 2-decomposition
is a uniquely determined cograph 2-partition.

In particular, in every cograph 2-partition $\{E_1,E_2\}$
of the extended literal graph,
the edges of the triangle $(0,1,2)$ must be entirely
contained in one $E_i$ and the pending edge 
$[6,9]$ must be in the same edge set $E_i$ as the edges of the 
of the triangle. Furthermore, the edges $[9,10]$ and $[9,11]$ must be contained
in $E_j$, $i\neq j$. 
\label{lem:col-lit}
\end{lemma}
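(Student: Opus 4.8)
The plan is to reduce everything to one elementary rule about cograph $2$-decompositions and then run a forcing argument on the concrete graphs of Figure~\ref{fig:literal}. The rule is: if $a-b-c-d$ is an induced $P_4$ of $G$, then its three edges $[a,b],[b,c],[c,d]$ cannot all lie in a single class $E_i$. Indeed, the three non-adjacent pairs $[a,c],[b,d],[a,d]$ are non-edges of $G$ and hence non-edges of every subgraph $(V,E_i)\subseteq G$, so placing all three path-edges in $E_i$ would make $\{a,b,c,d\}$ induce a $P_4$ in the cograph $(V,E_i)$, a contradiction. Conversely, when I need to certify validity, I must check the stronger statement that neither $(V,E_1)$ nor $(V,E_2)$ contains \emph{any} induced $P_4$ --- including paths that are not induced in $G$ but become induced once a chord edge is assigned to the opposite class. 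These two directions (necessary not-all-equal constraints from the induced $P_4$'s of $G$, and sufficiency from a global $P_4$-freeness check) are the two halves of the argument.

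First I would treat the plain literal graph. I would enumerate its induced $P_4$'s and, fixing without loss of generality the class of one edge as $E_1$, propagate the rule edge by edge: each induced $P_4$ that already has two of its three edges determined pins down the class of the third. I expect these $P_4$'s to overlap enough that the propagation reaches every edge, so the assignment is forced and, up to the symmetry $E_1\leftrightarrow E_2$, unique. Simultaneously I would rule out shared edges: if some $e\in E_1\cap E_2$, then an induced $P_4$ through $e$ whose other two edges are already forced into one class would place all three in that class, violating the rule; hence no edge is shared and the decomposition is in fact a partition.

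For the extended literal graph I would establish the three displayed claims with the same machinery, going through the finitely many ways each edge can be assigned. For the triangle $(0,1,2)$: I would show that every way of splitting its three edges between $E_1$ and $E_2$ is incompatible with the attached structure, in each case by exhibiting four vertices that induce a $P_4$ inside a single class; hence all three triangle edges must share one class $E_i$. For the pendant edge $[6,9]$: the induced $P_4$'s along the path joining the triangle to vertex $9$ through $6$ propagate the class $E_i$ onto $[6,9]$. For $[9,10]$ and $[9,11]$: the induced $P_4$'s these form together with $[6,9]$ force both out of $E_i$, i.e.\ into $E_j$ with $j\neq i$. Disjointness and uniqueness then follow exactly as in the literal case.

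The main obstacle is organizational rather than conceptual: the argument is a finite but intricate case analysis, and the delicate point is the sufficiency/partition direction, where I must ensure that no subgraph $(V,E_i)$ contains a $P_4$ that is \emph{not} induced in $G$ --- a path whose chord has been pushed into the other class. Keeping a complete list of the induced $P_4$'s of each gadget, and checking $P_4$-freeness of both classes against that full list rather than only against $G$, is the step most likely to hide a gap; I would therefore verify it exhaustively from the explicit edge set depicted in Figure~\ref{fig:literal}.
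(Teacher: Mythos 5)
Your overall strategy --- a forcing/case analysis on the concrete gadgets driven by the constraint that no class of the decomposition may contain all three edges of an induced $P_4$ --- is the same as the paper's. But there are two concrete problems with the proposal as written. The first is that you relegate the ``chord-type'' $P_4$'s (paths that are \emph{not} induced in $G$ but become induced in $(V,E_i)$ once their chord is excluded from $E_i$) to the sufficiency/verification half, and plan to run the uniqueness propagation using only the $P_4$'s induced in $G$ itself. That propagation stalls at the very first step. The literal graph is a triangle $(0,1,2)$ with a square glued onto each triangle edge; to show the triangle is monochromatic one assumes $[0,1],[0,2]\in F_1$ and $[1,2]\notin F_1$, and the paper then invokes the paths $6-2-0-1$ and $2-0-1-5$. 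Both contain the chord $[1,2]\in E(G)$, so neither is induced in $G$; they are induced only in $(V,F_1)$ \emph{because} $[1,2]$ has been excluded from $F_1$. The same chord-type reasoning recurs at the square on $\{1,2,5,6\}$ and at the elimination of $[1,4]$. So the rule you need throughout the forcing direction is the stronger one (no class contains a $P_4$ induced in its own subgraph), not merely the not-all-equal constraints coming from induced $P_4$'s of $G$.

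The second problem is that the entire content of the lemma \emph{is} the finite case analysis, and your proposal defers it (``I expect these $P_4$'s to overlap enough that the propagation reaches every edge''). Nothing in the sketch guarantees that the forcing chain closes, nor does it exhibit the complementary arguments showing each edge is \emph{excluded} from the opposite class --- which is exactly what upgrades ``every 2-decomposition is unique'' to ``every 2-decomposition is a partition.'' The paper's proof consists precisely of this explicit chain (triangle monochromatic, then $[1,5],[0,3],[0,8],[1,4],[2,6],[2,7]$ forced into $F_2$ and out of $F_1$, then $[5,6],[3,4],[7,8]$ forced into $F_1$ and out of $F_2$, then the extension via $7-2-6-9$ and $5-6-9-y$). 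Until you carry that out, with the corrected rule, the lemma is not proved.
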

\begin{proof}
It is easy to verify that the given cograph 2-partition $\{E_1,E_2\}$ in Fig.
\ref{fig:literal} fulfills the conditions and 
is correct, since $G_1=(V,E_1)$ and $G_2=(V,E_2)$ do not contain
induced $P_4$'s and are, thus, cographs.
We have to show that it is also unique.

Assume that there is another cograph 2-decomposition $\{F_1,F_2\}$. 
Note, for any cograph 2-decomposition $\{F_1,F_2\}$ it must hold that 
two incident edges in the triangle $(0,1,2)$ are contained in 
one of the sets $F_1$ or $F_2$. 
W.l.o.g.\ assume that $[0,1], [0,2]\in F_1$.

Assume first that  $[1,2] \not\in F_1$. 
In this case, because of the paths $P_4 = 6-2-0-1$ and $P_4 = 2-0-1-5$
it must hold that 
$[2,6], [1,5]\not\in F_1$ and thus, $[2,6], [1,5]\in F_2$. 
However, in this case and due to the paths $P_4 = 6-2-1-4$ and $2-0-1-4$ 
the edge $[1,4]$ can neither be contained in $F_1$ nor in $F_2$, 
a contradiction.  Hence, $[1,2] \in F_1$.

Note, the square $S_{1256}$ induced by vertices $1,2,5,6$ cannot have all
edges in $F_1$, as otherwise the subgraph $(V,F_1)$ would contain
the induced $P_4 = 6-5-1-0$. 
Assume that  
$[1,5]\in F_1$. As not all edges $S_{1256}$ are contained
in $F_1$, at least one of the edges $[5,6]$ and $[2,6]$ 
must be contained in $F_2$. If only one of the edges $[5,6]$, resp., $[2,6]$
is contained in $F_2$, we immediately obtain the 
induced $P_4=6-2-1-5$, resp., $6-5-1-2$ in $(V,F_1)$ and therefore, 
both edges $[5,6]$ and $[2,6]$ must be contained in $F_2$. 
But then the edge $[2,7]$ can neither be contained in $F_1$ (due to the induced $P_4=5-1-2-7$)
nor in $F_2$ (due to the induced $P_4=5-6-2-7$), a contradiction. 
Hence, $[1,5]\not\in F_1$ and thus, $[1,5]\in F_2$ for any $2$-decomposition. 
By analogous arguments and due to symmetry, all edges $[0,3]$, $[0,8]$, $[1,4]$, $[2,6]$, $[2,7]$
are contained in $F_2$, but not in $F_1$.

Moreover, due to the induced $P_4= 7-2-6-5$ and since $[2,6], [2,7] \in F_2$, 
the edge $[5,6]$ must be in $F_1$ and not in $F_2$. By analogous arguments and due to symmetry, 
it holds that $[3,4],[7,8]\in F_1$ and  $[3,4],[7,8]\not\in F_2$.
Finally, none of the edges of the triangle $(0,1,2)$ can be contained 
in $F_2$, as otherwise, we obtain an induced $P_4$ in $(V,F_2)$. 
Taken together, any $2$-decomposition of the literal graph must be a 
partition and is unique. 

Consider now the extended literal graph in Figure \ref{fig:literal}. As this
graph contains the literal graph as induced subgraph, the unique $2$-partition
of the underlying literal graph
is determined as by the preceding construction. 
Due to the path $P_4 = 7-2-6-9$ with $[2,6], [2,7] \in F_2$ we can 
conclude that $[6,9]\not\in F_2$ and thus $[6,9]\in F_1$. 
Since there are induced paths $P_4 = 5-6-9-y$, $y=10,11$ 
with $[5,6],[6,9]\in F_1$ we obtain that $[9,10], [9,11]\not\in F_1$
and thus, $[9,10], [9,11]\in F_2$ for any
$2$-decomposition (which is in fact a $2$-partition) 
of the extended literal graph, as claimed. \hfill \qed\end{proof}

\begin{figure}[tbp]
  \centering
  \includegraphics[bb= 0 475 398 660, scale=0.5]{./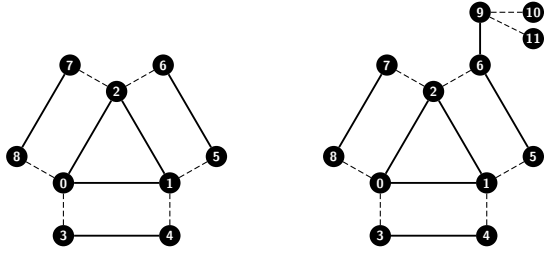}
  \caption{Left the \emph{literal graph} and right the \emph{extended} literal graph with 
			unique corresponding cograph 2-partition (indicated by dashed and bold-lined edges)
			is shown.}
  \label{fig:literal}
\end{figure}

\begin{figure}[tbp]
  \centering
  \includegraphics[bb=  4 322 554 611, scale=0.5]{./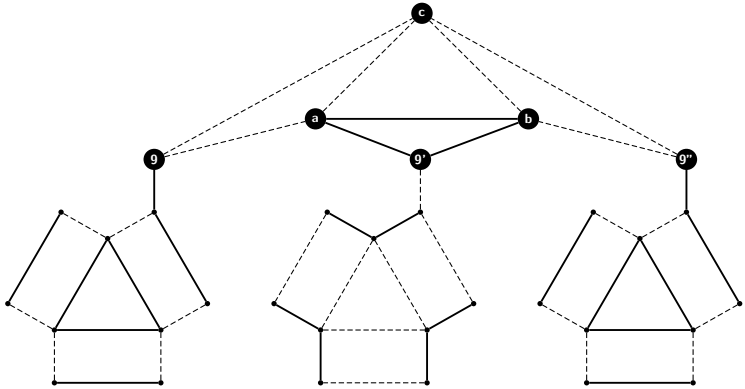}
  \caption{ Shown is a \emph{clause gadget} which consists of a triangle $(a,b,c)$ and
			 three extended literal graphs (as shown in Fig.\ \ref{fig:literal}) with edges attached to $(a,b,c)$. 
			A corresponding cograph 2-partition is indicated by dashed and bold-line edges. 
}
  \label{fig:clause}
\end{figure}

\begin{lemma}
	Given the clause gadget in Fig.\ \ref{fig:clause}. 

	For any cograph 2-decomposition,  
  all edges of exactly two of the triangles in the underlying
	three extended literal graphs must be contained in one $E_i$
	and not in $E_j$,
	while the edges of the triangle of one extended literal graph must be in  $E_j$
	and not in $E_i$, $i\neq j$. 

	Furthermore, 
	for each cograph 2-decomposition exactly two of the edges $e,e'$
	of the triangle $(a,b,c)$ must be in one $E_i$ while the
	other edge $f$ is in $E_j$ but not in $E_i$, $j\neq i$. 
	The cograph 2-decomposition can be chosen so that 
	in addition 
	$e,e'\not \in E_j$, 
	resulting in a cograph 2-partition of the 
	clause gadget.
	\label{lem:col-clause}
\end{lemma}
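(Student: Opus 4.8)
The plan is to build directly on Lemma~\ref{lem:col-lit} and then to carry out a case analysis of induced $P_4$'s running through the edges that attach the central triangle $(a,b,c)$ to the three extended literal graphs, in exactly the spirit of the proof of Lemma~\ref{lem:col-lit}. First I would fix an arbitrary cograph 2-decomposition $\{E_1,E_2\}$ of the clause gadget and record what is already forced on its three extended literal graphs $L_1,L_2,L_3$. Since each $L_p$ occurs as an induced subgraph of the gadget, Lemma~\ref{lem:col-lit} applies verbatim to it: the restriction of $\{E_1,E_2\}$ to $L_p$ is the unique cograph 2-partition of $L_p$, so the three edges of the triangle of $L_p$ lie entirely in exactly one of $E_1,E_2$, and the pending edge and the edges $[9,10],[9,11]$ are pinned down accordingly. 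Thus each literal graph carries a well-defined binary \emph{state}, namely the class containing its triangle.

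Next I would show that the central triangle $(a,b,c)$ cannot be monochromatic. Assuming for contradiction that all three of $[a,b],[a,c],[b,c]$ lie in the same class, say $E_1$, I would follow the attachment edges joining $a,b,c$ to the (already determined) vertices of $L_1,L_2,L_3$ and exhibit an induced $P_4$ — formed by a triangle edge, an attachment edge, and an edge of one of the literal graphs — that forces some attachment edge into neither $E_1$ nor $E_2$, exactly as the edges $[1,4]$ and $[2,7]$ were ruled out in the proof of Lemma~\ref{lem:col-lit}. Since $(a,b,c)$ has only three edges and cannot be monochromatic, exactly two of them must lie in one class $E_i$ and the remaining edge $f$ in $E_j$, $i\neq j$. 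A further round of the same $P_4$ bookkeeping then shows both that this split is forced and, simultaneously, that the states of $L_1,L_2,L_3$ cannot all coincide; as three literal triangles distributed over two classes admit only a monochromatic or a two-one pattern, this means exactly two of the literal triangles sit in one class and the third in the other. This is the substantive content of the lemma, and it is where I expect the main difficulty to lie: the argument must enumerate \emph{all} induced $P_4$'s passing through the attachment edges and verify that the case distinction is complete, which is more delicate than for a single literal graph because the three literal graphs interact only through the shared triangle $(a,b,c)$.

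Finally I would verify the partition claim. Taking the assignment forced above and additionally removing $e,e'$ from $E_j$ — so that $e,e'\in E_i\setminus E_j$ while $f\in E_j\setminus E_i$ — together with the unique 2-partitions of $L_1,L_2,L_3$ and a consistent choice on the attachment edges, yields sets $E_1,E_2$ that are pairwise disjoint and cover all edges of the gadget. It then remains only to check that neither $(V,E_1)$ nor $(V,E_2)$ contains an induced $P_4$; this is the same local verification already performed for the literal graph in Lemma~\ref{lem:col-lit}, now applied to the finitely many edges surrounding $(a,b,c)$. Carrying out this check certifies that the decomposition can indeed be chosen to be a cograph 2-partition of the clause gadget, completing the proof.
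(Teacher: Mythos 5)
Your first step (restricting the 2-decomposition to the three literal graphs and invoking Lemma~\ref{lem:col-lit} to fix a binary ``state'' for each) matches the paper, but your middle step has a genuine gap. You propose to show first that the triangle $(a,b,c)$ cannot be monochromatic and to deduce from this that the three literal-triangle states cannot all coincide. In the decomposition setting this is both false and backwards. False: a valid cograph 2-decomposition of the clause gadget may have all of $[a,b],[a,c],[b,c]\in E_1$ --- the lemma only forbids the odd edge $f$ from lying in $E_i$, while the other two edges $e,e'$ are forced into $E_i$ but may \emph{additionally} lie in $E_j$ (the paper explicitly notes this residual freedom at the end of its proof). So the hypothesis ``all three central edges lie in the same class $E_1$'' does not yield a contradiction, and no induced $P_4$ will force an attachment edge out of both classes under that hypothesis alone. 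Backwards: the classes of the attachment edges $[9,\cdot],[9',\cdot],[9'',\cdot]$ --- and hence which $P_4$'s through $(a,b,c)$ are dangerous --- are determined by the literal-triangle states via Lemma~\ref{lem:col-lit}, not by the central triangle. The workable order is the paper's: assume the three literal triangles all lie in the same class; then all six attachment edges lie in the other class, say $E_2$, and the induced paths $9-a-b-9''$, $9'-a-c-9''$, $9-c-b-9'$ force $[a,b],[a,c],[b,c]\notin E_2$; but then $9-a-9'-b$ is an induced $P_4$ whose three edges all lie in $E_2$, contradicting that $(V,E_2)$ is a cograph. Note that this decisive $P_4$ consists of attachment edges only, not of ``a triangle edge, an attachment edge, and a literal-graph edge'' as you describe.

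Once the two--one split of the literal triangles is established, the determination of $(a,b,c)$ is the short computation you sketch ($[a,b]\notin E_2$ from $9-a-b-9''$, and $[b,c],[a,c]\in E_2$ because otherwise $b-9''-c-9$ or its analogue is an induced $P_4$ in $(V,E_2)$), and your final partition check is fine. One further caveat: the \emph{extended} literal graph is not an induced subgraph of the clause gadget, since its two attachment targets become adjacent vertices of $(a,b,c)$; so Lemma~\ref{lem:col-lit} does not apply ``verbatim.'' One applies it to the induced literal graph and then re-checks that the paths $7-2-6-9$ and $5-6-9-y$ remain induced in the gadget in order to pin down $[6,9]$ and the attachment edges.
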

\begin{proof}
It is easy to verify that the given cograph 2-partition in Fig. \ref{fig:clause} 
	fulfills the conditions and is correct, as $G_1=(V,E_1)$ and $G_2=(V,E_2)$ are cographs.

	As the clause gadget contains the literal graph as induced subgraph, the unique $2$-partition
	of the underlying literal graph is determined as by the construction given in Lemma \ref{lem:col-lit}. 
    Thus, each edge of the triangle in each underlying literal graph is contained in either one 
	of the sets $E_1$ or $E_2$. 
	Assume that edges of the triangles in the three literal gadgets are \emph{all} contained
	in the same set, say $E_1$. 
	Then, Lemma \ref{lem:col-lit} implies that 
	$[9,a], [9,c], [9',a],[9',b],[9'',b], [9'',c] \in E_2$
	and none of them is contained in $E_1$. 
	Since there are induced $P_4$'s: $9-a-b-9''$, $9'-a-c-9''$ and $9-c-b-9'$, 
	the edges 
	$[a,b], [a,c], [b,c]$ cannot be contained in $E_2$, and thus must be
	in $E_1$. However, this is 
	not possible, since then we would have the induced paths $P_4=9-a-9'-b$ in 
	the subgraph $(V,E_2)$
	a contradiction. Thus, the edges of the triangle of exactly one literal gadget 
    must be contained in a different set $E_i$ than
	the edges of the other triangles in the other two literal gadgets. 
	W.l.o.g. assume that the $2$-decomposition of the underlying 
	literal gadgets is given as in Fig.\ \ref{fig:clause}. 
	and identify bold-lined edges with $E_1$ and dashed edges with $E_2$. 
	
    It remains to show that this 2-decomposition of the underlying three 
	literal gadgets
	 determines 
	which of the edges
	of triangle $(a,b,c)$ are contained in which of the sets $E_1$ and $E_2$.	
	Due to the induced path $9-a-b-9''$ and since $[9,a],[9'',b]\in E_2$, the edge $[a,b]$ cannot be contained
	in $E_2$ and thus, is contained in $E_1$. 
	Moreover, if $[b,c]\not\in E_2$, then  
    there is an induced path $P_4 = b-9''-c-9$ in the subgraph $(V,E_2)$, a
	contradiction. Hence,  $[b,c] \in E_2$ and by analogous arguments, $[a,c]\in E_2$. 
	If  $[b,c] \not\in E_1$ and  $[a,c] \not\in E_1$, then we
	obtain a cograph 2-partition. However, it can easily be verified that 
	there is still a degree of 
	freedom and $[a,c], [b,c]\in E_1$ is allowed for a 
	valid cograph 2-decomposition. \hfill \qed
\end{proof}

\begin{figure}[tbp]
  \centering
  \includegraphics[bb=  9 490 607 679, scale=0.55]{./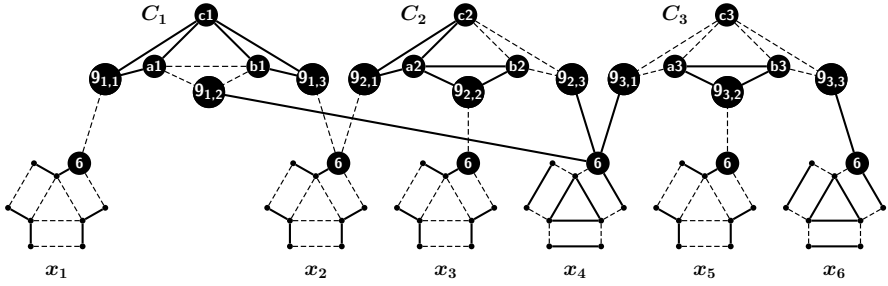}
  \caption{Shown is the graph $\Psi$ as constructed in the proof of Theorem \ref{thm:npc}.
			In particular, $\Psi$ reflects the NAE 3-SAT formula 
  			$\psi = \{C_1, C_2, C_3\}$ with clauses $C_1=(x_1,x_4,x_2), C_2=(x_2,x_3,x_4)$ and 
  				$C_3=(x_4,x_5,x_6)$. Different literals obtain the same truth assignment true or false, 
				whenever the edges of the triangle in their corresponding literal gadget are contained in 
				the same set $E_i$ of the cograph 2-partition, highlighted by dashed and bold-lined edges.  }
  \label{fig:formula}
\end{figure}

We are now in the position  to prove the NP-completeness of
\textsc{Cotree 2-Representation} and \textsc{Cotree 2-Decomposition}
by reduction from the following problem. 

\begin{problem1}\textsc{Monotone NAE 3-SAT}\\[0.1cm]
\begin{tabular}{ll}
	  \emph{Input:}&Given a set $U$ of Boolean variables and a set of clauses \\
					&$\psi = \{C_1, \dots, C_m\}$ over $U$
					such that for all $i=1, \dots, m$ \\
					&it holds that $|C_i|=3$ and $C_i$ contains no negated variables. \\
	\emph{Question:}&Is there a truth assignment to $\psi$ such that in each $C_i$\\
					&not all three literals are set to true?
\end{tabular}
\end{problem1}

\begin{theorem}[\cite{Sch78,moret-97}]
\textsc{Monotone NAE 3-SAT} is NP-complete. 
\end{theorem}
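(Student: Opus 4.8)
The plan is to establish both membership in \textbf{NP} and \textbf{NP}-hardness; the former is immediate, so the substance lies in a polynomial reduction. Membership is clear: a truth assignment is a certificate of size $O(|U|)$, and for each clause one checks in constant time whether its three variables receive a common value, so verification runs in $O(m)$ time. For hardness I would reduce from \textsc{3-SAT}, passing first through the (non-monotone) not-all-equal version and only then removing the negations; throughout I write ``NAE-satisfied'' for a clause whose literals do not all take the same truth value.

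First I would turn a \textsc{3-SAT} instance into an equisatisfiable \textsc{NAE-4-SAT} instance by introducing one global variable $s$ and replacing every clause $(\ell_1\vee\ell_2\vee\ell_3)$ with the four-literal not-all-equal clause $(\ell_1,\ell_2,\ell_3,s)$. Correctness rests on the \emph{global-complement symmetry} of not-all-equal constraints: an assignment NAE-satisfies a clause if and only if its bitwise complement does. Hence in any NAE-satisfying assignment one may assume $s$ is false, and then ``not all four equal'' together with $s$ false forces at least one $\ell_i$ to be true, i.e.\ the original clause is satisfied; the converse sets $s$ to false and uses that each original clause already has a true literal. Next I would split each four-literal clause $(\ell_1,\ell_2,\ell_3,s)$ into the two three-literal clauses $(\ell_1,\ell_2,c)$ and $(\overline c,\ell_3,s)$ with a fresh variable $c$; a short case check shows that a value of $c$ NAE-satisfying both new clauses exists precisely when the original four literals are not all equal. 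This yields an equisatisfiable \textsc{NAE-3-SAT} instance, still containing negated literals.

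The decisive step is to eliminate negations and reach the \emph{monotone} problem. For each variable $x$ I would introduce two positive variables $x^{+}$ and $x^{-}$, replace every occurrence of $x$ by $x^{+}$ and of $\overline x$ by $x^{-}$, and append a gadget forcing $x^{+}\neq x^{-}$. Here the subtlety is that a monotone not-all-equal clause $(p,q,w)$ is automatically satisfied as soon as $p\neq q$, so one can never \emph{force equality} of two variables; fortunately one can force \emph{inequality}. Using three fresh variables $a_1,a_2,a_3$ per pair, the four clauses
\begin{equation*}
(x^{+},x^{-},a_1),\quad (x^{+},x^{-},a_2),\quad (a_1,a_2,a_3),\quad (x^{+},x^{-},a_3)
\end{equation*}
are jointly NAE-satisfiable exactly when $x^{+}\neq x^{-}$: if $x^{+}=x^{-}=\true$ the first two clauses force $a_1=a_2=\false$, the third then forces $a_3=\true$, and the last becomes all-true, a contradiction (the case $x^{+}=x^{-}=\false$ is symmetric), whereas for $x^{+}\neq x^{-}$ every clause containing the pair is already satisfied and $a_1,a_2,a_3$ can be set freely. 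Replacing literals and adding these gadgets is polynomial and produces a \textsc{Monotone NAE 3-SAT} instance that is NAE-satisfiable if and only if the \textsc{NAE-3-SAT} instance is, completing the reduction.

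I expect the monotonization gadget to be the main obstacle: the earlier transformations are standard, but recognizing that equality \emph{cannot} be enforced—and hence routing every negation through the complementary-pair inequality gadget—is the point that makes the monotone version go through. One could alternatively bypass the whole chain by invoking Schaefer's dichotomy theorem, observing that the ternary not-all-equal relation is neither $0$-valid, $1$-valid, Horn, dual-Horn, affine, nor bijunctive, so that satisfiability over this single relation, which is exactly \textsc{Monotone NAE 3-SAT}, is \textbf{NP}-complete; I would nevertheless present the explicit reduction as the more self-contained argument.
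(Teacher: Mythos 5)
Your proof is correct. There is, however, nothing in the paper to compare it against: the theorem is imported verbatim from Schaefer \cite{Sch78} and Moret \cite{moret-97} with no internal argument, so what you have written is essentially the standard proof underlying those citations, made self-contained. All three links of your chain check out. The global-complement symmetry correctly justifies fixing $s=\false$ in the $3$-SAT $\to$ NAE-$4$-SAT step; the split of $(\ell_1,\ell_2,\ell_3,s)$ into $(\ell_1,\ell_2,c)$ and $(\overline{c},\ell_3,s)$ survives the case analysis; and your monotonization gadget works as claimed --- with $x^{+}=x^{-}$ the first two clauses pin $a_1=a_2$ to the complementary value, the third then pins $a_3$ back to the value of $x^{+}$, and the fourth clause becomes all-equal, whereas with $x^{+}\neq x^{-}$ every clause containing the pair is automatically NAE-satisfied and $(a_1,a_2,a_3)$ is trivially handled. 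Your structural observation that monotone NAE clauses can force \emph{inequality} but never \emph{equality} is exactly the point that makes the last step go through. Two minor housekeeping remarks: the paper's problem statement requires $|C_i|=3$, i.e.\ three distinct variables per clause, so you should preprocess the initial $3$-SAT instance to eliminate clauses containing repeated literals or a variable together with its negation (a standard polynomial step, needed so that $(\ell_1,\ell_2,c)$ never degenerates); and the Schaefer-dichotomy shortcut you sketch at the end is also valid --- it is in fact the route taken by the cited reference --- but your explicit reduction is the more useful thing to record here, since the paper's NP-hardness proof of \textsc{Cograph 2-Decomposition} reduces from exactly this problem.
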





\begin{theorem}
	\textsc{Cograph 2-Decomposition}, and thus, \textsc{Cotree 2-Representation} is NP-complete.
	\label{thm:npc}
\end{theorem}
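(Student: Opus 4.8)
The plan is to prove NP-completeness by a polynomial-time reduction from \textsc{Monotone NAE 3-SAT}, using the literal, extended literal, and clause gadgets whose 2-decomposition behavior has already been pinned down in Lemmata \ref{lem:col-lit} and \ref{lem:col-clause}. Membership in NP is immediate: given a candidate 2-decomposition $\{E_1,E_2\}$, one checks in polynomial time that each $(V,E_i)$ contains no induced $P_4$, and since \textsc{Cotree 2-Representation} and \textsc{Cograph 2-Decomposition} are equivalent as observed at the start of Section \ref{sec:crd}, it suffices to treat the decomposition version. So the real work is NP-hardness.

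First I would describe the construction of the graph $\Psi$ from a formula $\psi=\{C_1,\dots,C_m\}$ over variables $U$, mirroring Figure \ref{fig:formula}. For each clause $C_i$ I build one clause gadget (a triangle $(a,b,c)$ with three attached extended literal graphs, as in Figure \ref{fig:clause}), so that each of the three literal slots of $C_i$ corresponds to one extended literal graph. The crucial idea is to encode a truth assignment by which edge set $E_1$ or $E_2$ the triangle of a literal gadget lands in: Lemma \ref{lem:col-clause} guarantees that in any cograph 2-decomposition the triangles of \emph{exactly two} of the three literal gadgets lie in one class and the third in the other class, which is precisely the ``not-all-equal'' condition — all three literals of a clause cannot receive the same truth value. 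To tie occurrences of the same Boolean variable across different clauses together, I would identify (merge) the literal gadgets corresponding to the same variable, or link them by suitable edges/shared subgraphs, so that the constraint ``the triangle of every occurrence of variable $x$ sits in the same $E_i$'' is enforced; the correspondence is then: triangle in $E_1$ $\leftrightarrow$ $x=\true$, triangle in $E_2$ $\leftrightarrow$ $x=\false$ (or vice versa). Clearly $\Psi$ has size polynomial in $|U|+m$ and is constructible in polynomial time.

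Next I would prove the two directions of correctness. For the forward direction, given a satisfying NAE truth assignment, I assign each literal gadget's triangle to $E_1$ or $E_2$ according to the truth value of its variable, extend this to the rest of each extended literal graph by the \emph{unique} forced 2-partition of Lemma \ref{lem:col-lit}, and color the clause triangle $(a,b,c)$ by the freedom granted in Lemma \ref{lem:col-clause}; the NAE condition ensures the hypotheses of Lemma \ref{lem:col-clause} are met for every clause gadget, so each $(V,E_i)$ is $P_4$-free. For the converse, given any cograph 2-decomposition of $\Psi$, Lemma \ref{lem:col-lit} forces a consistent truth value on each variable (well-defined because occurrences of the same variable were identified/linked), and Lemma \ref{lem:col-clause} guarantees that in each clause gadget not all three triangles share a class, i.e.\ the induced assignment is not-all-equal for every clause. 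Hence $\Psi$ has a cograph 2-decomposition if and only if $\psi$ is NAE-satisfiable.

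The main obstacle, and the step deserving the most care, is the \emph{linkage mechanism} that forces all literal gadgets belonging to one variable into the same class: I must verify that whatever identification or connecting edges I add neither creates new induced $P_4$'s that spuriously obstruct a valid decomposition, nor destroys the rigidity established in Lemma \ref{lem:col-lit} (the uniqueness of the literal gadget's 2-partition), nor interferes with the degree of freedom on the clause triangle used in Lemma \ref{lem:col-clause}. Concretely I expect to check that the attachment points are chosen so that any induced $P_4$ crossing between two merged gadgets is already controlled by the forced partition, thereby transmitting the constraint ``same $E_i$'' without side effects. Once this local consistency analysis is done, the equivalence follows directly from the two lemmata, and since \textsc{Monotone NAE 3-SAT} is NP-complete, \textsc{Cograph 2-Decomposition} and therefore \textsc{Cotree 2-Representation} are NP-complete. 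Because the constructed decomposition in the forward direction can be taken to be a partition (again by the freedom in Lemma \ref{lem:col-clause}), the same reduction also shows that deciding the existence of a cograph 2-partition is NP-complete.
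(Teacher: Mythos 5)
Your plan matches the paper's proof essentially step for step: membership in NP via linear-time cograph recognition, then a reduction from \textsc{Monotone NAE 3-SAT} using the literal, extended literal, and clause gadgets, with the triangle's class encoding the truth value and Lemma \ref{lem:col-clause} enforcing the not-all-equal condition. The linkage mechanism you flag as the delicate point is resolved in the paper exactly by your first suggested option — a single shared literal graph per variable, with a distinct pendant vertex $9_{i,j}$ (attached at vertex $6$) and two attachment edges to the clause triangle for each occurrence — and the paper's forward-direction argument carries out the check that no monochromatic induced $P_4$ crosses between clause gadgets.
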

\begin{proof}
Given a graph $G=(V,E)$ and cograph 2-decomposition $\{E_1,E_2\}$, 
	one can verify in linear time whether $(V,E_i)$ is a cograph \cite{corneil1985linear}.
	Hence, \textsc{Cograph 2-Partition} $\in$ NP. 

   We will show by reduction from \textsc{Monotone NAE 3-SAT} that
   \textsc{Cograph 2-Decomposition} is NP-hard. 
   Let $\psi = (C_1, \dots, C_m)$ be an arbitrary instance of
	\textsc{Monotone NAE 3-SAT}.
	Each clause $C_i$ is identified with a triangle $(a_i,b_i,c_i)$.
	Each variable $x_j$ is identified with a literal graph 
	as shown in Fig.\ \ref{fig:literal} (left) and different
	variables are identified with different literal graphs. 
	Let $C_i = (x_{i_1}, x_{i_2}, x_{i_3})$ and $G_{i_1}$,
	$G_{i_2}$ and $G_{i_3}$ the respective literal graphs. 
	Then, we extend each
	literal graph $G_{i_j}$ by adding an edge $[6,9_{i,j}]$. 
	Moreover, we add to $G_{i_1}$ the edges $[9_{i,1},a_i], [9_{i,1},c_i]$, 
	to $G_{i_2}$ the edges $[9_{i,2},a_i], [9_{i,2},b_i]$, 
	to $G_{i_3}$ the edges $[9_{i,3},c_i], [9_{i,3},b_i]$. The latter construction
	connects each literal graph with the triangle $(a_i,b_i,c_i)$ of
	the respective clause $C_i$ in a unique way, see Fig. \ref{fig:clause}.
	We denote the clause gadgets
	by $\Psi_i$ for each clause $C_i$. We repeat this construction
	for all clauses $C_i$ of $\psi$ resulting in the graph  $\Psi$. 
	An illustrative example is given in   Fig.\ \ref{fig:formula}.	
	Clearly, this reduction can be done in polynomial time
	in the number $m$ of clauses.  

	We will show in the following that $\Psi$ has a 
	cograph 2-decomposition (resp., a cograph 2-partition) 
	if and only if $\psi$ has a truth assignment $f$. 

	Let $\psi = (C_1, \dots, C_m)$ have a truth assignment. 
	Then in each clause $C_i$ at least one of the literals $x_{i_1}, x_{i_2}, x_{i_3}$ 
	is set to true and one to false. 
	We assign all edges $e$ of the triangle in the 
	corresponding literal graph $G_{i_j}$ to $E_1$, 
	if 	$f(x_{i_j})=true$ and to $E_2$, otherwise.
	Hence, each edge of exactly two of the triangles (one in $G_{i_j}$ and one in $G_{i_{j'}}$) are 
	contained in one $E_r$	and not in $E_s$, 
	while the edges of the other triangle in $G_{i_{j''}}$, $j''\neq j,j'$
	are contained in  $E_s$	and not in $E_r$, $r\neq s$, 
	as needed
	for a possible valid cograph 2-decomposition (Lemma \ref{lem:col-clause}). 
	We now apply the construction of a valid cograph 2-decomposition (or cograph 2-partition) 
	for each $\Psi_i$ as given in Lemma \ref{lem:col-clause}, 
	starting with the just created assignment of edges contained in 	
	the triangles in $G_{i_j}$, $G_{i_{j'}}$ and $G_{i_{j''}}$ to $E_1$ or $E_2$. In this way, we obtain a valid 
	cograph 2-decomposition (or cograph 2-partition) for each subgraph $\Psi_i$ of $\Psi$.
	Thus, if there would be an induced $P_4$ in $\Psi$ with all edges 
	belonging to the same set $E_r$, 
	then this $P_4$ can only have edges belonging to different 
	clause gadgets $\Psi_k, \Psi_l$. 
	By construction, such a $P_4$ can only exist along 
	different clause gadgets $\Psi_k$ and $\Psi_l$ 
	if $C_k$ and $C_l$  have a literal $x_i=x_{k_m}=x_{l_n}$ in 
	common. In this case, Lemma \ref{lem:col-clause} implies that 
	the edges $[6,9_{k,m}]$ and $[6,9_{l,n}]$ in $\Psi_i$ must belong
	to the same set $E_r$. 
	Again by Lemma \ref{lem:col-clause}, the edges $[9_{k,m},y]$
	and  $[9_{k,m},y']$,\ $y,y'\in \{a_k,b_k,c_k\}$ as well as
	the edges $[9_{l,n},y]$
	and  $[9_{l,n},y']$,\ $y,y'\in \{a_l,b_l,c_l\}$ must be
	in a different set $E_s$ than $[6,9_{k,m}]$ and $[6,9_{l,n}]$. 
	Moreover, respective edges $[5,6]$ in $\Psi_k$, as well as in 
	 $\Psi_l$ (Fig.\ \ref{fig:literal}) must 
	be in $E_r$, i.e., in the same set as $[6,9_{k,m}]$ and $[6,9_{l,n}]$. 
	However, in none of the cases it is possible to find
	an induced $P_4$ with all edges 
	in the same set $E_r$ or $E_s$ along different clause gadgets. 
	Hence, we 
	obtain a valid cograph 2-decomposition, resp., cograph 2-partition of $\Psi$. 

	Now assume that $\Psi$ has a valid cograph 2-decomposition (or a
	cograph 2-partition). Any variable $x_{i_j}$ contained in some clause $C_i =
	(x_{i_1}, x_{i_2}, x_{i_3})$ is identified with a literal graph
	$G_{i_j}$. Each clause $C_i$ is, by construction, identified with
	exactly three literal graphs $G_{i_1},G_{i_2},G_{i_3}$, resulting in
	the clause gadget $\Psi_i$. Each literal graph $G_{i_j}$ contains
	exactly one triangle $t_j$. Since $\Psi_i$ is an induced subgraph of
	$\Psi$, we can apply Lemma \ref{lem:col-clause} and conclude that for
	any cograph 2-decomposition (resp., cograph 2-partition) all edges of exactly
	two of three triangles $t_1,t_2,t_3$ are contained in one set $E_r$,
	but not in $E_s$, and all edges of the other triangle are contained in
	$E_s$, but not in $E_r$, $s\neq r$. Based on these triangles we define
	a truth assignment $f$ to the corresponding literals: w.l.o.g.\ we
	set $f(x_i)=$\emph{true} if the edge $e\in t_i$ is contained in $E_1$
	and $f(x_i)=$\emph{false} otherwise. By the latter arguments and Lemma
	\ref{lem:col-clause}, we can conclude that, given a valid cograph
	2-partitioning, the so defined truth assignment $f$ is  a valid truth
	assignment of the Boolean formula $\psi$, since no three different literals in one
	clause obtain the same assignment and at least one of the variables is
	set to $\emph{true}$. Thus, \textsc{Cograph 2-Decomposition} is
	NP-complete
	
	Finally, because \textsc{Cograph 2-Decomposition} and \textsc{Cotree 2-Representation} 
	are equivalent problems, 
	the NP-completeness of \textsc{Cotree 2-Representation} follows.  \hfill \qed
\end{proof}

As the proof of Theorem \ref{thm:npc} allows us to use cograph 2-partitions in all proof steps,
instead of cograph 2-decompositions, we can immediately infer the
NP-completeness of the following problem for $k=2$, as well.

\begin{problem1}\textsc{Cograph $k$-Partition}\\[0.1cm]
\begin{tabular}{ll}
	 \emph{Input:}&Given a graph $G=(V,E)$ and an integer $k$. \\
	 \emph{Question:}&Is there a Cograph $k$-Partition of $G$?
\end{tabular}
\end{problem1}

\begin{theorem}
	\textsc{Cograph 2-Partition} is NP-complete. 
\end{theorem}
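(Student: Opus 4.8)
The plan is to leverage the already-completed proof of Theorem \ref{thm:npc} almost verbatim, since the reduction from \textsc{Monotone NAE 3-SAT} that produces the graph $\Psi$ was engineered so that every step goes through with cograph 2-partitions rather than the more general cograph 2-decompositions. First I would observe that $\textsc{Cograph 2-Partition} \in NP$ by exactly the same certificate-checking argument as for \textsc{Cograph 2-Decomposition}: given a partition $\{E_1,E_2\}$ of $E$ with $E_1\cap E_2=\emptyset$, one verifies in linear time via the Corneil et al.\ recognition algorithm \cite{corneil1985linear} that each $(V,E_i)$ is $P_4$-free, and one checks disjointness in $O(|E|)$ time. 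Thus membership in NP is immediate and is not the crux.

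For NP-hardness I would reuse the construction $\psi\mapsto\Psi$ from the proof of Theorem \ref{thm:npc} unchanged and argue both directions using cograph 2-\emph{partitions} throughout. The key point, which I would make explicit, is that the two supporting lemmata already guarantee partition-level control: Lemma \ref{lem:col-lit} states that \emph{every} cograph 2-decomposition of the (extended) literal graph is in fact a uniquely determined cograph 2-partition, and Lemma \ref{lem:col-clause} asserts that each clause gadget admits a cograph 2-decomposition that \emph{can be chosen to be} a cograph 2-partition. Consequently, in the forward direction (from a satisfying NAE assignment $f$ to a decomposition of $\Psi$), the construction in the proof of Theorem \ref{thm:npc} already yields disjoint edge sets $E_1,E_2$ by invoking the partition option of Lemma \ref{lem:col-clause} for each gadget $\Psi_i$; no edge is ever assigned to both classes. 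In the reverse direction, any cograph 2-partition of $\Psi$ is a fortiori a cograph 2-decomposition, so the same triangle-based reading-off of a truth assignment applies and yields a valid NAE assignment.

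Hence the entire equivalence ``$\psi$ has an NAE truth assignment $\iff$ $\Psi$ has a cograph 2-partition'' holds, and since the reduction is polynomial in the number $m$ of clauses, \textsc{Cograph 2-Partition} is NP-complete. I do not expect a genuine obstacle here; the only subtlety is \emph{bookkeeping}, namely confirming that at no point in the forward construction does the argument tacitly allow an edge to lie in both $E_1$ and $E_2$. This amounts to checking that every appeal to Lemma \ref{lem:col-clause} uses its partition clause (the ``$e,e'\notin E_j$'' refinement for the clause triangle) and that the literal-graph assignments, being forced partitions by Lemma \ref{lem:col-lit}, contribute no overlap. Once that is verified, the theorem follows directly from Theorem \ref{thm:npc} together with the partition-strengthenings built into the two lemmata.
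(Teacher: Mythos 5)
Your proposal is correct and matches the paper's own argument: the paper proves this theorem precisely by remarking that the reduction of Theorem \ref{thm:npc} goes through with cograph 2-partitions in every step, relying on the partition-strengthenings in Lemmas \ref{lem:col-lit} and \ref{lem:col-clause} exactly as you describe. Your write-up merely makes explicit the bookkeeping that the paper leaves implicit.
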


As a direct consequence of the latter results, we obtain the 
following result.

\begin{corollary}
Let $G$ be a given graph that is not a cograph. 
The  three optimization problems to find the least integer $k>1$ so that 
 there is 
	a Cograph $k$-Partition,  
	a Cograph $k$-Decomposition, or
	a Cotree $k$-Representation
for the graph $G$, are NP-hard. 
\end{corollary}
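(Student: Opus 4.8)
The plan is to obtain this as a direct (Turing) reduction from the decision problems already shown NP-complete, namely \textsc{Cograph 2-Decomposition}/\textsc{Cotree 2-Representation} (Theorem \ref{thm:npc}) and \textsc{Cograph 2-Partition}. The crucial observation is that, because $G$ is assumed \emph{not} to be a cograph, there is no cograph $1$-decomposition, $1$-partition, or cotree $1$-representation of $G$; hence in each of the three settings the least feasible integer $k$ is at least $2$. Since $2$ is therefore the smallest value $k>1$ could possibly take, the optimum equals $2$ \emph{if and only if} $G$ admits a cograph $2$-decomposition (resp.\ $2$-partition, resp.\ cotree $2$-representation). No monotonicity of feasibility in $k$ is needed for this, only that $2$ is the minimal candidate.

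First I would treat the decomposition case. Suppose there were a polynomial-time algorithm $A$ that, on a non-cograph input $G$, returns the least $k>1$ admitting a cograph $k$-decomposition. Then one could decide \textsc{Cograph 2-Decomposition} on any non-cograph instance by computing $A(G)$ and answering ``yes'' precisely when $A(G)=2$, by the equivalence noted above. I would then point out that the instance $\Psi$ constructed in the proof of Theorem \ref{thm:npc} is itself a non-cograph (each literal graph already contains an induced $P_4$, e.g.\ $6-2-0-1$), so \textsc{Cograph 2-Decomposition} stays NP-complete when restricted to non-cograph inputs. Consequently such an algorithm $A$ would decide an NP-complete problem in polynomial time, which establishes that computing the least $k>1$ is NP-hard.

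The same argument then transfers verbatim to the other two problems. For \textsc{Cograph $k$-Partition} I would invoke the NP-completeness of \textsc{Cograph 2-Partition}, noting (as already remarked before that theorem) that the whole reduction of Theorem \ref{thm:npc} can be carried out using $2$-partitions and produces non-cograph instances; hence computing the least $k>1$ admitting a cograph $k$-partition is NP-hard. For \textsc{Cotree $k$-Representation} I would use that it is equivalent to \textsc{Cograph $k$-Decomposition} for every fixed $k$, so the two optimization versions coincide and NP-hardness carries over immediately.

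I do not expect a substantive obstacle, since the statement is essentially a repackaging of three decision results into their optimization counterparts. The only point I would verify with care is the reduction of the decision problem to the optimization problem, i.e.\ that ``the least $k>1$ equals $2$'' is genuinely equivalent to the NP-complete $2$-feasibility question. This is exactly where the hypothesis that $G$ is not a cograph is essential (it rules out $k=1$ and makes $2$ the minimal feasible value), and where one must confirm that the hard instances may be taken to be non-cographs.
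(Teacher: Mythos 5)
Your argument is correct and is exactly the reasoning the paper intends: the corollary is stated there without proof as a ``direct consequence'' of the NP-completeness of the $k=2$ decision problems, and your reduction (observing that for a non-cograph the least $k>1$ equals $2$ iff a $2$-decomposition/partition/representation exists, and that the hard instances $\Psi$ are themselves non-cographs) is the standard way to make that consequence explicit. Your extra care in checking that the reduction's instances contain induced $P_4$'s is a worthwhile detail the paper leaves implicit.
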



\subsection{Integer Linear Program}

Let $G=(V,E)$ be a given graph with maximum degree $\Delta$. We want to
find a cograph-$k$-decomposition, resp., partition $\Pi= \{E_1,\dots,E_k\}$
for the least integer $k$. Theorem \ref{thm:k-delta} implies that
the least integer $k$ is always less or equal to $\Delta+1$.

We define binary variables $E^i_{xy}$ for all $x,y\in V$ and 
$1\leq i\leq \Delta+1$ s.t. $E^i_{xy}=1$ if and only if
the edge $[x,y]\in E$ is contained in class $E_i$ of $\Pi$.
Moreover, we define the binary variables $M^i$ with $1 \leq i \leq \Delta+1$ 
so that $M^i=1$ if and only if the class $E_i\in\Pi$ is non-empty
in our construction. In other words, 
$\sum_{1 \leq i \leq \Delta+1} M^i$ will be the cardinality of $\Pi$.

In order to find the cograph decomposition, resp., partition $\Pi$ of $G$ 
having the fewest number of elements we need the following objective
function.
\begin{equation}
\min \sum_{1 \leq i \leq \Delta+1} M^i
\end{equation}

If we want to find a cograph-decomposition and
hence, that each edge is contained in at least one class $E_i$ of $\Pi$
we need the next constraint.
\begin{align}
\sum_{1 \leq i \leq \Delta+1} E^i_{xy} \geq 1   \text{ for all } [x,y] \in E.
\label{dDDD}
\end{align}

In contrast, if we want to find a cograph-partition and
hence, that each edge is contained in exactly one class $E_i$ of $\Pi$
we need this constraint.
\begin{align}
\sum_{1 \leq i \leq \Delta+1} E^i_{xy} = 1  \text{ for all } [x,y] \in E. \tag{\ref{dDDD}'}
\end{align}

Moreover, we must ensure that non-edges $[x,y]\notin E$ are not contained in 
any class of $\Pi$ which is done with the next constraint.
\begin{equation}
\sum_{1 \leq i \leq \Delta+1} E^i_{xy} = 0 \text{ for all } [x,y] \notin E. 
\end{equation}

Whenever there is a class $E_i$ containing an edge $[x,y]\in E$
and hence, if  $E^i_{xy}=1$ then we must set 
$M^i=1$. 
\begin{equation}
\sum_{x,y \in V} E^i_{xy} \leq |V|^2 M^i \text{ for all } 1 \leq i \leq \Delta+1.
\end{equation}

Finally we have to ensure that each subgraph $G_i=(V,E_i)$ is a cograph,
and thus, does not contain induced $P_4$'s,
which is achieved with the following constraint.
\begin{equation}
E^i_{xy} + E^i_{yu} + E^i_{uv} - E^i_{xu} - E^i_{xv} - E^i_{yv} \leq 2
\end{equation}
for all $ 1 \leq i \leq \Delta+1$ and all ordered tuples $(x,y,u,v)$ of distinct $x,y,u,v \in V$.

This ILP-formulation needs $O(\Delta |V|^2)$ variables
and $O(|E|+\Delta+|V|^4)$ constraints.





%



\bibliographystyle{spmpsci}      
\bibliography{paper}


\end{document}